\newtheorem{theorem}{Theorem}
\newtheorem{corollary}{Corollary}
\begin{document}

\title{Efficient Downlink Channel Reconstruction for FDD Multi-Antenna Systems}
\author{Yu Han$^\ast$, Tien-Hao Hsu$^\dagger$, Chao-Kai Wen$^\dagger$, Kai-Kit~Wong$^\S$, and Shi Jin$^\ast$\\
$^\ast$National Mobile Communications Research Laboratory, Southeast University, Nanjing, China\\
$^\dagger$Institute of Communications Engineering, National Sun Yat-sen University, Kaohsiung 804, Taiwan\\
$^\S$Department of Electronic and Electrical Engineering, University College London, United Kingdom}
\maketitle

\vspace{-5mm}

\begin{abstract}
In this paper, we propose an efficient downlink channel reconstruction scheme for a frequency-division-duplex multi-antenna system by utilizing uplink channel state information combined with limited feedback. Based on the spatial reciprocity in a wireless channel, the downlink channel is reconstructed by using frequency-independent parameters. We first estimate the gains, delays, and angles during uplink sounding. The gains are then refined through downlink training and sent back to the base station (BS). With limited overhead, the refinement can substantially improve the accuracy of the downlink channel reconstruction. The BS can then reconstruct the downlink channel with the uplink-estimated delays and angles and the downlink-refined gains. We also introduce and extend the Newtonized orthogonal matching pursuit (NOMP) algorithm to detect the delays and gains in a multi-antenna multi-subcarrier condition. The results of our analysis show that the extended NOMP algorithm achieves high estimation accuracy. Simulations and over-the-air tests are performed to assess the performance of the efficient downlink channel reconstruction scheme. The results show that the reconstructed channel is close to the practical channel and that the accuracy is enhanced when the number of BS antennas increases, thereby highlighting that the promising application of the proposed scheme in large-scale antenna array systems.
\end{abstract}

\begin{keywords}
Multiple antenna system, FDD, downlink CSI acquisition, over-the-air test.
\end{keywords}

\section{Introduction}\label{Sec:Introduction}

Frequency division duplex (FDD) is one of the most widely used duplexing modes for mobile communication systems where both directions of communication can take place simultaneously without interference. The FDD mode in multiple-input multiple-output (MIMO) antenna systems has achieved great success in 3G and 4G commercial mobile communication networks. Recently, large-scale or massive MIMO, which is capable of using spatial dimensions to guarantee an extraordinary spectral efficiency, has been identified as a key enabler for 5G networks.

However, the use of large-scale antenna arrays in 5G \cite{3GPPTS38201,3GPPTS38912,3GPPTS38913,Andrews2014,Larsson2014} and future networks imposes huge challenges to the acquisition of downlink channel state information (CSI) at the base station (BS) for FDD-MIMO systems, and such information is crucial to an excellent system performance especially in the downlink. The lack of reciprocity between the uplink and downlink channels on different frequency bands makes the downlink CSI acquisition process difficult to achieve. Therefore, downlink CSI is usually acquired through downlink training and feedback.

In previous generations of networks, only a few antennas are used at the BS. An abundant amount of time and frequency resources are available to form orthogonal pilots and the amount of feedback is relatively small. While in 5G and future networks, the use of hundreds or even thousands of antenna ports prevents the design of completely orthogonal pilot patterns. In this case, reusing pilots becomes inevitable \cite{Larsson2014,Chen2017,Marzetta2010}, thereby affecting the accuracy of the CSI estimation. Moreover, using a huge amount of feedback for a high-dimensional complex channel matrix is impractical. Thus, downlink CSI acquisition remains a key problem in FDD massive MIMO systems.

Many studies have been conducted to address the aforementioned problem \cite{Choi2015,Shen2017,Choi2014,Gao2015,Rao2014,Hugl2002,Xie2017,Han2017,Khalilsarai2017,Shen2016,Vasisht2016Decimeter,Vasisht2016Eliminating}. These studies generally apply two types of approaches. In the first approach, downlink CSI is solely obtained from downlink training and feedback but does not require orthogonality among the pilots that are transmitted from different antennas. For example, codebooks are used to quantize the space, and only the codebook indices are sent back to the BS. In \cite{Choi2015} and \cite{Shen2017}, trellis-based and angle-of-departure-adaptive subspace codebooks were proposed, respectively, to quantize the channel of FDD massive MIMO systems. These methods require training and produce feedback overhead. Other methods have attempted to exploit the slow-varying nature of the space. In \cite{Choi2014}, the authors assumed that the channels were correlated in both time and space; they also proposed an open- and closed-loop training with CSI memory that could be derived from previous time instances. If the channel is sparse, then compressed sensing can effectively reduce the training and feedback overhead \cite{Gao2015,Rao2014}.

In the second approach, the spatial reciprocity between channels on two close frequency bands is applied. In \cite{Hugl2002}, the authors validated the spatial congruence by conducting measurements and demonstrated a small deviation in the dominant directions of arrival at the frequencies 1935 MHz and 2125 MHz. Based on these findings, \cite{Xie2017} reconstructed the downlink channel with the aid of the downlink channel covariance matrix that was inferred from the uplink channel covariance matrix. Using the uplink CSI also helps reduce the downlink training and feedback overhead. For instance, \cite{Han2017} and \cite{Khalilsarai2017} proposed to exploit the channel sparsity for estimating the propagation directions via uplink training and used the direction estimates in the downlink training process to reduce the feedback overhead. In \cite{Shen2016}, the authors proposed a compressed downlink CSI acquisition method that uses the partial support information obtained from the uplink and demonstrated that this method could significantly reduce the training overhead. Nevertheless, the aforementioned downlink CSI acquisition methods have not been examined via over-the-air (OTA) tests. To verify the effectiveness of CSI on other bands in practical systems, \cite{Vasisht2016Decimeter} and \cite{Vasisht2016Eliminating} conducted OTA tests and found that the aforementioned method have promising application in inferring the RF channels on one band by using the CSI on another band. In \cite{Vasisht2016Eliminating}, the authors proposed to completely eliminate the downlink training and feedback in long-term evolution (LTE) systems because the gain of each path in a wireless channel was thought to be frequency-independent similar to delay and angle. However, no sufficient evidence can confirm the frequency-independent feature of the gains, which may greatly degrade the performance of the scheme proposed in \cite{Vasisht2016Eliminating} when the uplink and downlink frequency bands are distinctly separated.

Motivated by spatial reciprocity, this paper proposes an efficient downlink channel reconstruction scheme that utilizes the frequency-independent parameters of the delays and angles of the multipath channel for a FDD multi-antenna orthogonal frequency-division multiplexing (OFDM) system.\footnote{The proposed scheme can be straightforwardly applied to FDD massive MIMO systems.} Given the uncertainty of their frequency-independence, we further refine the gains by using a limited amount of downlink training and feedback. Our major contributions are summarized as follows.
\begin{itemize}
\item \emph{Downlink channel reconstruction}: An efficient downlink channel reconstruction scheme is developed. The frequency-independent delays and angles are initially estimated during the uplink training process by using the Newtonized orthogonal matching pursuit (NOMP) algorithm that is extended in this paper. Afterward, the gains are refined by using the least squares (LS) algorithm in the downlink with a small amount of pilots and feedback. With the uplink-estimated delays and angles as well as the downlink-refined gains, the downlink channel can be reconstructed at the BS. The necessity of downlink refinement is proven through theoretical analyses and simulations. The numerical results demonstrate that the proposed efficient downlink reconstruction scheme can be used to reconstruct a highly accurate downlink channel.
\item \emph{Extension of the NOMP algorithm}: NOMP was originally designed to extract two parameters, namely, gains and frequencies, for a noisy mixture of sinusoids \cite{Mamandipoor2016}. In this paper, we extend and adapt this algorithm to a trivariate case, where the gain, delay, and angle of each path are estimated. During each iteration of this algorithm, a 2D dictionary is utilized and the Newton step refines the delay and angle simultaneously. After updating the stopping criteria, we evaluate the accuracy of the estimations by deriving the lower bounds of the estimation errors and observe that the extracted delays and angles are very close to the real values.
\item \emph{OTA test results}: An OTA testbed is set up to assess the system performance of the proposed downlink reconstruction scheme in practical wireless communication scenarios. We observe that the channel reconstructed by the proposed scheme is near the linear minimum mean square error (LMMSE)-estimated channel, thereby demonstrating the necessity of gain refinement and the effectiveness of the reconstruction. The OTA results also show that with more antennas, the efficient channel reconstruction scheme demonstrates higher accuracy and can behave well in a massive MIMO scenario.
\end{itemize}

The rest of this paper is organized as follows. Section \ref{Sec:ChannelModel} introduces the multipath channel between the BS and a user and studies the frequency-independent spatial parameters over different frequency channels. Section \ref{Sec:Reconstruction} proposes an efficient downlink channel reconstruction scheme based on the uplink-estimated frequency-independent parameters and analytically justifies the importance of refining the gains in the downlink. Section \ref{Sec:NOMP} presents the extended NOMP algorithm for estimating the gains, delays, and angles through the uplink training process and analyzes its estimation accuracy. Section \ref{Sec:Results} discusses the simulation and OTA test results for the proposed efficient downlink channel reconstruction scheme. Section \ref{Sec:Conclusion} concludes the paper.

\emph{Notations}---In this paper, the matrices and vectors are denoted by uppercase and lowercase boldface letters, respectively, while the superscripts $(\cdot)^\dag$, $(\cdot)^{H}$, and $(\cdot)^{T}$ denote the pseudo-inverse, conjugate-transpose, and transpose, respectively. In addition, $\mathcal{R}\{\cdot\}$ takes the real component of a complex number, while $\mathbb{E}\{\cdot\}$ represents the expectation with respect to all random variables inside the brackets. We also use $\left| \cdot \right|$ and $\left\| \cdot \right\|$ to denote taking the absolute value and modulus operations, respectively, and use the notations $\left\lfloor \cdot \right\rfloor$ and $\left\lceil \cdot \right\rceil$ to denote rounding a decimal number to its nearest lower and higher integers, respectively.

\section{Channel Model}\label{Sec:ChannelModel}

In this section, we describe the wireless channel between the BS and its serving user by tracing the propagation paths of the signal. A single cell of a mobile communication system operates in the FDD mode by employing OFDM. We denote the difference between the uplink and downlink carrier frequencies by $\triangle F$ and assume that each uplink and downlink frequency band has $N$ sub-carriers with spacing $\triangle f$. We focus on the baseband and denote the uplink central sub-carrier by DC. The BS is equipped with a uniform linear array (ULA) with $M$ antenna elements, while the user has one antenna.

The uplink multipath channel between the user and the BS antenna element $m$ on subcarrier $n$ can be modeled as
\begin{equation}\label{Eq:SISOchannelmn1}
h^{\rm ul}_m (n)=\sum\limits_{l = 0}^{{L^{\rm ul}-1}} {g^{\rm ul}_l e^{j2\pi \Phi^{\rm ul}_{l,n} + j2\pi \Theta^{\rm ul}_{l,m}}},
\end{equation}
in which $m = -\lfloor{M}/{2}\rfloor,\dots,-1,0,1,\dots,\lceil{M}/{2}\rceil-1$, $n = -\lfloor{N}/{2}\rfloor,\dots,-1,0,1,\dots,\lceil{N}/{2}\rceil-1$, $L^{\rm ul}$ is the number of propagation paths in the uplink, $g^{\rm ul}_l$ is the gain of the $l$th propagation path in the uplink and is complex, $\Phi^{\rm ul}_{l,n}$ is introduced by the delay of the $l$th path in the uplink, and $\Theta^{\rm ul}_{l,m}$ is the phase difference between antenna element $m$ and $0$ resulting from the time difference in the arrival of the $l$th path in the uplink.

$\tau^{\rm ul}_l$ denotes the delay of the $l$th propagation path in the uplink upon its arrival at antenna element $0$, which satisfies $0<\tau^{\rm ul}_l<1/{\triangle f}$. We know that $\Phi^{\rm ul}_{l,n}=n \triangle f\tau^{\rm ul}_l$ and we further denote the angle of the $l$th path in uplink by $\theta^{\rm ul}_l$, which satisfies $0<\theta^{\rm ul}_l<2\pi$. The wireless signal travels different distances when arriving at different BS antenna elements as illustrated in Fig. \ref{Fig:ULA}. The signal from direction $\theta^{\rm ul}_l$ travels at a longer distance of $d^{\rm ul}_{l,m} = md\sin\theta^{\rm ul}_l$ upon arriving at element $m$ when compared with element $0$, where $d$ denotes the distance between two adjacent antenna elements that equals to ${\lambda}/{2}$ and $\lambda$ denotes the carrier wavelength. The phase difference between element $m$ and $0$ for the $l$th path is calculated as
\begin{equation}\label{Eq:Theta}
\Theta^{\rm ul}_{l,m} = \frac{f d^{\rm ul}_{l,m}}{c}=m\frac{d}{\lambda}\sin\theta^{\rm ul}_l,
\end{equation}
where $f$ denotes the carrier frequency and $c$ denotes the speed of light. Therefore, \eqref{Eq:SISOchannelmn1} can be rewritten as
\begin{equation}\label{Eq:SISOchannelmn}
h^{\rm ul}_m (n)=\sum\limits_{l = 0}^{{L^{\rm ul}-1}} {g_l e^{j2\pi n \triangle f\tau^{\rm ul}_l + j2\pi m\frac{d}{\lambda}\sin\theta^{\rm ul}_l}}.
\end{equation}

By stacking the channels on all subcarriers and antennas into a vector, we obtain the multi-subcarrier multi-antenna channel between the user and the BS, which is expressed as
\begin{equation}\label{Eq:SIMO-OFDMchannel}
{\bf h}^{\rm ul}=\sum\limits_{l = 0}^{{L^{\rm ul}-1}} {g^{\rm ul}_l {\bf p}(\tau^{\rm ul}_l) \otimes {\bf a}(\theta^{\rm ul}_l)},
\end{equation}
where $\otimes$ represents the Kronecker product,
\begin{equation}\label{Eq:pvec}
{\bf p}(\tau) = \left[e^{-j2\pi \lfloor\frac{N}{2}\rfloor \triangle f \tau},\dots, e^{j2\pi (\lceil\frac{N}{2}\rceil-1) \triangle f \tau}\right]^H
\end{equation}
denotes the delay-related phase vector of the OFDM module, and
\begin{equation}\label{Eq:avec}
{\bf a}(\theta) = \left[e^{-j2\pi \lfloor\frac{M}{2}\rfloor \frac{d}{\lambda}\sin\theta},\dots, e^{j2\pi (\lceil\frac{M}{2}\rceil-1) \frac{d}{\lambda}\sin\theta}\right]^H
\end{equation}
denotes the steering vector of the ULA.

As a special case, the BS only has one antenna. Under this single-input single-output (SISO) condition, the angles of propagation paths are not modeled in this channel, but the uplink channel vector on all subcarriers can be written as
\begin{equation}\label{Eq:SISOchannel}
{\bf h}^{\rm ul}_{\rm SISO}=\sum\limits_{l = 0}^{{L^{\rm ul}-1}} {g^{\rm ul}_l {\bf p}(\tau^{\rm ul}_l)}.
\end{equation}

\begin{figure}
  \centering
  \includegraphics[scale=0.50]{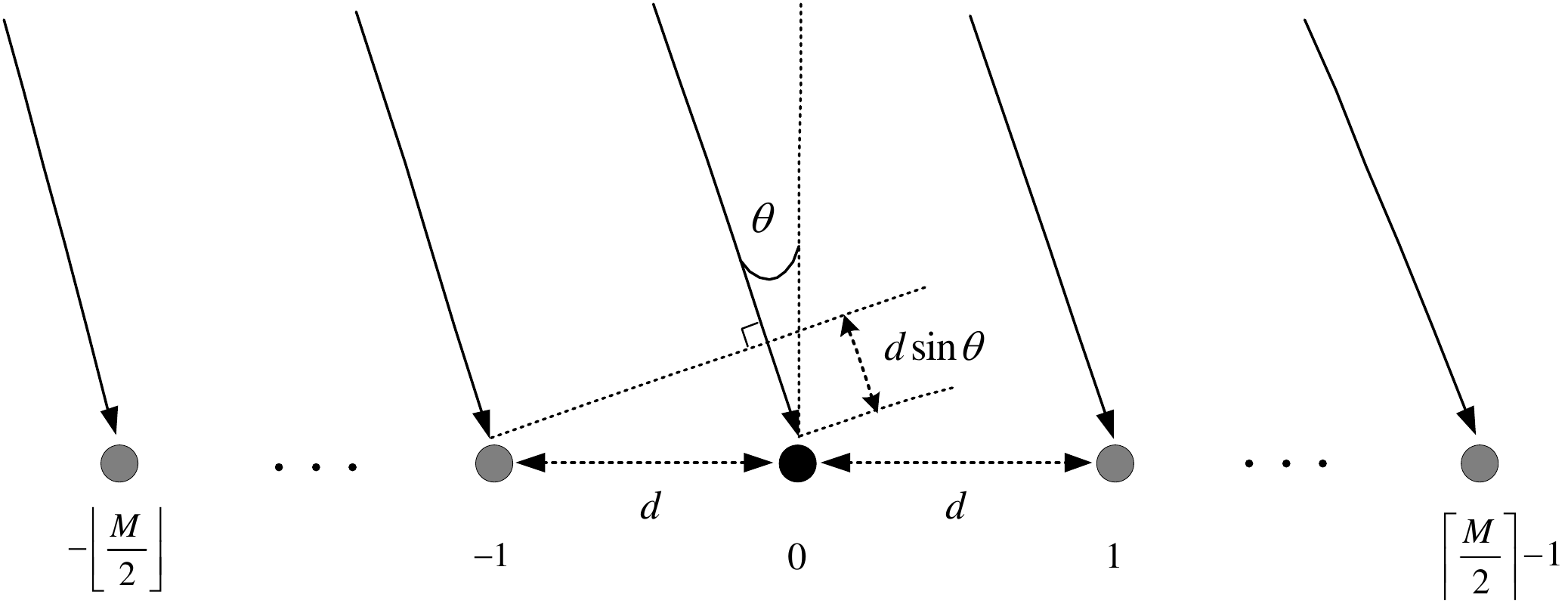}
  \caption{Difference of the propagation distances when the wireless signal arrives at two adjacent ULA elements. The black circle represents the reference antenna element at the BS.}\label{Fig:ULA}
\end{figure}

For the downlink, by using the uplink carrier frequency as the reference (i.e., $0$ Hz), we denote the downlink carrier frequency by $\triangle F$ and model the downlink channel between the BS antenna array and the user on all subcarriers and antennas as
\begin{equation}\label{Eq:MISOchanneln}
{\bf h}^{\rm dl}=\sum\limits_{l = 0}^{{L^{\rm dl}-1}} {g^{\rm dl}_l e^{j2\pi \triangle F \tau^{\rm dl}_l} {\bf p}(\tau^{\rm dl}_l) \otimes {\bf a}(\theta^{\rm dl}_l)},
\end{equation}
where $L^{\rm dl}$ represents the number of propagation paths in the downlink, $g^{\rm dl}_l$ is the complex gain of the $l$th downlink propagation path, $\tau^{\rm dl}_l$ is the delay of the $l$th downlink path with respect to antenna element $0$ that satisfies $0<\tau^{\rm dl}_l<1/{\triangle f}$, and $\theta^{\rm dl}_l$ is the angle of the $l$th path in the downlink that satisfies $0<\theta^{\rm dl}_l<2\pi$.

Reciprocity does not normally apply in FDD systems because of the different operating frequencies in the uplink and downlink. Nonetheless, the uplink and downlink channels share a common propagation space between the BS and the user, and some partial reciprocity is expected if the frequency bands are within a certain coherent bandwidth.

The uplink and downlink signals propagate along common paths and are reflected by the same scatterers. Given that the wireless signals travel the same transmission distance and at the same speed, the delay is equal in both the uplink and downlink. According to the measurement results in \cite{Hugl2002} and \cite{Imtiaz2015}, the spatial directions or angles in the uplink channel are almost the same as those in the downlink channel. Therefore, $L^{\rm ul}=L^{\rm dl}=L$, $\tau^{\rm ul}_l=\tau^{\rm dl}_l=\tau_l$, and $\theta^{\rm ul}_l=\theta^{\rm dl}_l=\theta_l$ are obtained. The delays and angles $\{\tau_l, \theta_l \}_{l=0,\dots,L-1}$ are frequency-independent, thereby revealing a spatial reciprocity between the uplink and downlink.

\section{Efficient Downlink Channel Reconstruction}\label{Sec:Reconstruction}

The spatial reciprocity inspires us to reconstruct the FDD downlink channels by using the frequency-independent parameters estimated in the uplink instead of estimating the downlink CSI via massive downlink training and feedback. In this section, we propose an efficient downlink channel reconstruction scheme for FDD multi-antenna systems based on spatial reciprocity by using the uplink CSI with a small amount of downlink training and feedback overhead.

\subsection{Estimating Frequency-Independent Parameters}

During the uplink sounding process, the BS receives pilots sent from the user and is given the opportunity to estimate the frequency-independent parameters.
The pilot received by the BS antenna element $m$ on subcarrier $n$ can be expressed as
\begin{equation}\label{Eq:ULpilotsArray}
y^{\rm ul}_m (n) = h^{\rm ul}_m (n) s(n)+z^{\rm ul}_m(n)
= \sum\limits_{l = 0}^{{L-1}} {g^{\rm ul}_l e^{j2\pi n \triangle f\tau_l + j2\pi m\frac{d}{\lambda}\sin\theta_l}s(n)}+z^{\rm ul}_m(n),
\end{equation}
where $s(n)$ is the transmitted pilot on subcarrier $n$, and $z^{\rm ul}_m(n)$ is the complex Gaussian noise vector on BS antenna element $m$ and subcarrier $n$ with zero mean and unit variance.

In this multi-subcarrier multi-antenna system, the BS can receive pilots on each occupied subcarrier and antenna element. $N$ continuous subcarriers is assumed to be occupied by the pilots, while the central subcarrier is assumed  to be DC. The transmitted pilots on all subcarriers are equal to 1, thereby satisfying $s(-\lfloor {N/2}\rfloor )=\cdots=s(\lceil N/2\rceil-1)=1$. To detect the two-tuples $\{\tau_l, \theta_l \}_{l=0,\dots,L-1}$ from the received pilots, we stack the received pilots on all subcarriers and antennas together into a vector and obtain
\begin{equation}\label{Eq:ULpilots}
{\bf y}^{\rm ul} = \sum\limits_{l = 0}^{{L-1}}{g^{\rm ul}_l {\bf p}(\tau_l)\otimes{\bf a}(\theta_l)}+{\bf z}^{\rm ul},
\end{equation}
where ${\bf z}^{\rm ul}$ is the stacked uplink noise vector with i.i.d. elements. Here, we denote
\begin{equation}\label{Eq:uvec}
{\bf u}(\tau,\theta) = {\bf p}(\tau)\otimes{\bf a}(\theta).
\end{equation}
Therefore, \eqref{Eq:ULpilots} can be rewritten as
\begin{equation}\label{Eq:NewNOMPmodel}
{\bf y}^{\rm ul} = \sum\limits_{l = 0}^{{L-1}}{g^{\rm ul}_l {\bf u}(\tau_l,\theta_l)}+{\bf z}^{\rm ul}.
\end{equation}
Based on \eqref{Eq:NewNOMPmodel}, the parameter detection problem can be translated to a frequency detection problem.

From \cite{Mamandipoor2016}, we find that the NOMP algorithm behaves well in detecting frequencies from a mixture of sinusoids. NOMP can extract $(a_l, w_l)_{l=0,...,L-1}$ from
\begin{equation}\label{Eq:NOMPmodel}
{\bf y} = \sum\limits_{l = 0}^{{L-1}} {a_l {\bf x}(w_l)}+{\bf z},
\end{equation}
where ${\bf x}(w) = \left[1,e^{jw},\dots, e^{j(N-1)w}\right]^H$ and ${\bf z}$ is a complex Gaussian vector with i.i.d. elements. The NOMP algorithm introduced in \cite{Mamandipoor2016} only estimates one frequency parameter, namely, $w$. However, two frequency parameters need to be detected in our case, namely, $(\tau, \theta)$. In other words, three-tuples must be extracted from \eqref{Eq:NewNOMPmodel}, including $g^{\rm ul}$. The original NOMP algorithm cannot satisfy our requirement and we must extend it to fit the trivariate condition.

A detailed description of the trivariate NOMP algorithm is provided in \emph{Section \ref{Sec:NOMP}}. In the following part of this section, we suppose that the final results are obtained after the trivariate NOMP algorithm is applied to \eqref{Eq:NewNOMPmodel}. The detected three-tuples are recorded as $({\hat g}^{\rm ul}_l, {\hat \tau}_l, {\hat \theta}_l)_{l=0,...,\hat L-1}$.

\subsection{Necessity to Refine the Gains}

After obtaining the gain, delay, and angle of each path via uplink training and trivariate NOMP estimation, we reconstruct the downlink channel for the FDD transmission system.

As mentioned before, the delays and angles are frequency-independent parameters, and their uplink estimates can be applied in the reconstruction of the downlink frequency band channel. Based on these facts, \cite{Vasisht2016Eliminating} proposed an R2-F2 system that extracts the information of the propagation paths from the channels on band 1 in order to reconstruct the corresponding channels on band 2. This system allows the LTE BSs to infer the downlink channels by using the uplink-derived CSI and underscores the need to eliminate CSI feedback, which will significantly improve the time-frequency resource utilization. However, we are still unsure whether the reconstructed downlink channel is accurate enough if the uplink estimates are directly applied to the downlink channel model without using any downlink CSI.

In \cite{Vasisht2016Eliminating}, the precondition for eliminating downlink training and feedback is that all spatial parameters, including the gains, delays, and angles, are frequency-independent. The gain of each path is viewed to be identical in both the uplink and downlink. However, the existing measurements for the correlation of gains in different frequency bands do not provide sufficient evidence to confirm the frequency-independent feature of the gains. On the contrary, \cite{Imtiaz2015} demonstrated that due to phase difference, the power of a cluster differs in uplink and downlink. A more commonly accepted view is that the azimuth power spectrum, which can be regarded as the secondary moment of the complex gain, is highly correlated in both the uplink and downlink. This view has been confirmed by the measurements in \cite{Hugl2002}. In \cite{Xie2017}, the authors proposed to model the azimuth power spectrum based on a same shape but multiplied by a frequency-dependent factor, thereby suggesting that the amplitudes are not equal in different frequency bands. Previous studies generally hold that the instantaneous spatial complex gains are different in the uplink and downlink \cite{Goldberg1998,Liang2001}. Therefore, we cannot suppose that the gains are frequency-independent.

Meanwhile, even if the gains are assumed to be frequency-independent, the estimation errors will negatively affect the reconstructions on another frequency band. These errors are inevitable for any detection method, including NOMP and the optimization method used in \cite{Vasisht2016Eliminating}. The simplest single-antenna single-path case is used as an example to determine the impact of the estimation error. We denote the real gain and delay by $g$ and $\tau$, respectively, and denote the real channel on frequency $f_1$ by $h_1 = g e^{j2\pi f_1\tau}$. We assume that the estimated gain and delay on frequency $f_1$ are $\hat g$ and $\hat\tau$, respectively, and that the estimation error of delay is $\triangle\tau =\hat\tau -\tau$. For frequency $f_1$, $\hat g$ will compensate for the phase error caused by $\triangle\tau$ because the gain is updated by using LS estimation at the end of the NOMP algorithm. The reconstructed channel $\hat h_1 = \hat g e^{j2\pi f_1\hat \tau}$ will be very much the same as the original channel $h_1$. In this case, ``global accuracy'' is obtained instead of ``local accuracy''.

However, when $\hat g$ and $\hat\tau$ are used directly to reconstruct the channel on frequency $f_2$, the phase error $2\pi f_2 \triangle\tau$ is considerable if either $f_2$ or $\triangle\tau$ is large enough as shown in Fig.~\ref{Fig:PhaseError}(a). Meanwhile, the $\hat g$ derived on frequency $f_1$ is no longer able to compensate for this phase error on frequency $f_2$. The reconstructed channel on frequency $f_2$ is expressed as
\begin{equation}\label{Eq:EstimationError}
\hat h_2 = \hat g e^{j2\pi f_2\hat \tau} = h_2 \frac{\hat g}{g}e^{j2\pi f_2 \triangle\tau},
\end{equation}
where $\frac{\hat g}{g}e^{-j2\pi f_2 \triangle\tau}$ is the multiplicative estimation error from the global perspective. We can find that the derived channel on frequency $f_2$ is far from the real channel. Although $\hat g$ has the same absolute value as $g$, that is, $|\hat g| = |g|$, the phase difference between $\hat h_2$ and $h_2$ becomes unacceptably large because the phase information of the wireless channel is of great importance to the transceiver design.

\begin{figure}
  \centering
  \includegraphics[scale=0.8]{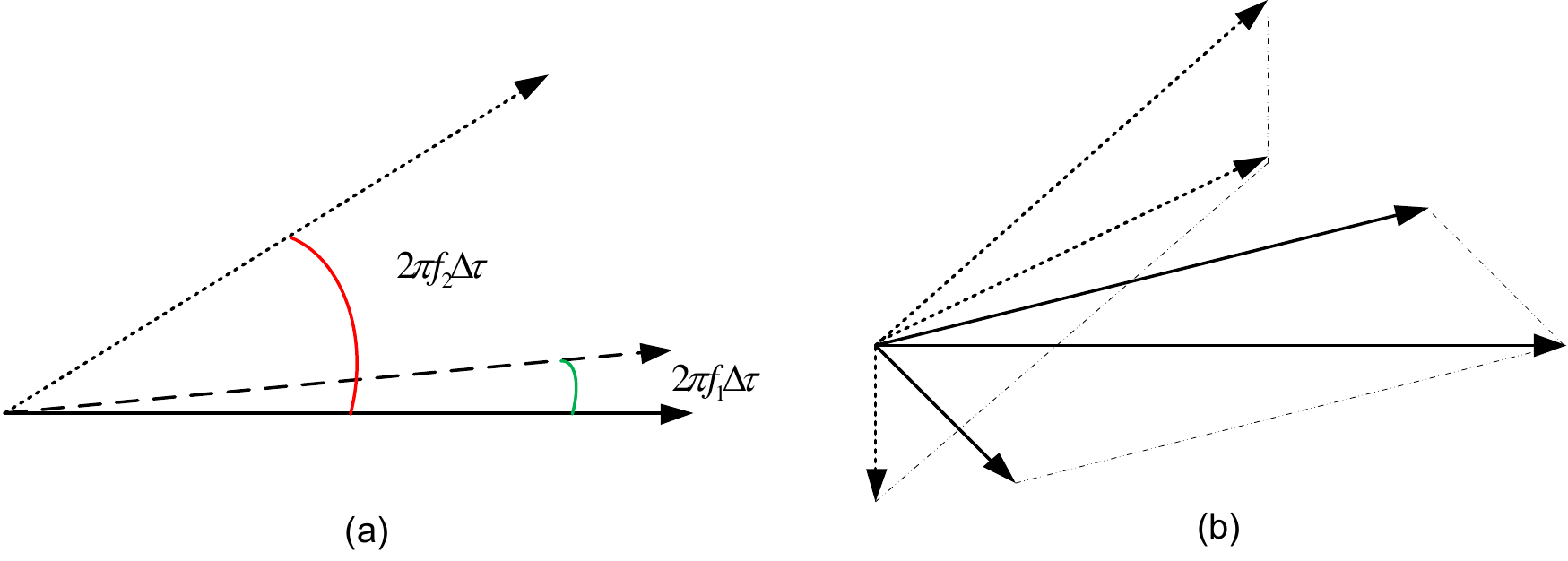}
  \caption{Impact of phase error: (a) Single-path scenario. The solid line with arrow represents the true phase of the path. The dotted lines with arrows are the phase errors on frequencies $f_1$ and $f_2$. (b) Multi-path scenario. The solid lines with arrows are the real multi-path components and the real superposition. The dotted lines with arrows are the estimated components and their superposition.}\label{Fig:PhaseError}
\end{figure}

The phase error will severely affect the multi-path channel reconstruction. As shown in Fig.~\ref{Fig:PhaseError}(b), the original channel comprises two paths with different amplitudes and delays, which are denoted by solid lines with arrows. As a result of the phase error, these multi-path components rotate and form an incorrect superposition. An angular error also takes place in highly complicated multi-antenna multi-path scenarios, thereby further harming the reconstruction on another band. Therefore, we do not suggest to follow the approach in \cite{Vasisht2016Eliminating}, which only uses the uplink CSI to reconstruct the downlink channel for FDD transmission systems.

\subsection{Re-Estimation and Reconstruction Scheme}

Given the inevitable estimation errors of the delays and angles, the gains are LS-estimated at the last step of the NOMP algorithm to compensate for these errors in the reconstruction of the uplink channel. Similarly, the gains can also be re-estimated via LS to compensate for the errors in reconstructing the downlink channel. This approach requires additional downlink overhead. Fortunately, only the gains need to be refined. Both the delays and angles estimated in the uplink are still applicable to the downlink channel reconstruction. Therefore, a small amount of overhead is required to refine the gains.

The gains are refined with the aid of pilots that are transmitted in the downlink. To retrieve the feature of the whole downlink frequency band, these pilots are sparsely distributed in the downlink band. We use comb-type all-one pilots and insert one pilot in every $K$ subcarriers. Afterward, $N_p=\lfloor N/K \rfloor$ subcarriers are occupied by the pilots, and the indices of the subcarriers are $n_0, n_2, \dots, n_{N_p-1}$.

Unlike the uplink, multiple antennas exist at the transmitter and a single antenna exists at the receiver. The pilots that are transmitted by the antenna array will be additively received at a single antenna. To enhance the received power, the pilots are beamformed before the transmission. Given the angles of the propagation paths $\{\hat\theta_l\}_{l=0,\dots,\hat L-1}$ that are estimated in the uplink, we target the pilots to these directions and concentrate the transmit power onto the propagation path of the channel. The following beamforming types are considered here:

{\em Type 1:} The pilots in one OFDM symbol are beamformed to target one specific direction. We need $\hat L$ OFDM symbols to send the pilots, and different OFDM symbols correspond to different directions. For subcarrier $n_i$ on the $j$th OFDM symbol, the received pilot can be expressed as
\begin{equation}\label{Eq:DLpilot1}
y^{\rm dl}_j(n_i) = \sum\limits_{l = 0}^{{L-1}} {g^{\rm dl}_l e^{j2\pi (\triangle F+n_i \triangle f) \tau_l} {\bf a}^H(\theta_l)} {\bf a}(\hat\theta_j) + z^{\rm dl}_j(n_i),
\end{equation}
where $z^{\rm dl}_j(n_i)$ is the downlink noise on subcarrier $n_i$ and OFDM symbol $j$, $i=0,\dots,N_p-1$, and $j = 0,\dots,\hat L-1$.

{\em Type 2:} The pilots are frequency-division multiplexed onto different directions, and only one OFDM symbol is needed. Subcarriers $n_0,\dots,n_{\hat L-1}$ correspond to directions $\hat\theta_0,\dots,\hat\theta_{\hat L-1}$, respectively. The received pilot on subcarrier $n_i$ can be expressed as
\begin{equation}\label{Eq:DLpilot2}
y^{\rm dl}(n_i) = \sum\limits_{l = 0}^{{L-1}} {g^{\rm dl}_l e^{j2\pi (\triangle F+n_i \triangle f) \tau_l} {\bf a}^H(\theta_l)} {\bf a}(\hat\theta_{\underline i}) + z^{\rm dl}(n_i),
\end{equation}
where $\underline i=i\!\!\!\mod\hat L$, $z^{\rm dl}(n_i)$ is the downlink noise on subcarrier $n_i$, and $i = 0,\dots,N_p-1$.

To estimate the downlink gains at the user side, the BS needs to inform the user with the uplink estimated parameters $(\hat\tau_l,\hat\theta_l),{l=0,\dots,\hat L-1}$ and the beamforming type. The user applies these estimates into \eqref{Eq:DLpilot1} or \eqref{Eq:DLpilot2} according to the beamforming type and rewrites the signal models as
\begin{equation}\label{Eq:DLpilot1m}
y^{\rm dl}_j(n_i) = \sum\limits_{l = 0}^{{\hat L-1}} {g^{\rm dl}_l e^{j2\pi (\triangle F+n_i \triangle f) \hat\tau_l} {\bf a}^H(\hat\theta_l)} {\bf a}(\hat\theta_j) + z^{\rm dl}_j(n_i)
\end{equation}
and
\begin{equation}\label{Eq:DLpilot2m}
y^{\rm dl}(n_i) = \sum\limits_{l = 0}^{{\hat L-1}} {g^{\rm dl}_l e^{j2\pi (\triangle F+n_i \triangle f) \hat \tau_l} {\bf a}^H(\hat \theta_l)} {\bf a}(\hat\theta_{\underline i}) + z^{\rm dl}(n_i).
\end{equation}

After stacking the received pilots on all subcarriers and OFDM symbols, the following unified signal model for both types is constructed as
\begin{equation}\label{Eq:DLpilots}
{\bf y}^{\rm dl} = {\bf Ag}^{\rm dl}+{\bf z}^{\rm dl},
\end{equation}
where ${\bf y}^{\rm dl}$ and ${\bf z}^{\rm dl}$ are the stacked $M_p\times 1$ dimensional received pilots, gains, and noise vectors, respectively, while ${\bf A}$ denotes the $M_p\times \hat L$ dimensional coefficient matrix.

For Type 1, $M_p=N_p\hat L$, and the matrix ${\bf A}$ comprises $\hat L$ submatrices
\begin{equation}\label{Eq:matrixA1}
{\bf A} = \left[\begin{matrix} {\bf A}^{(0)} \\ \vdots \\ {\bf A}^{(\hat L-1)} \end{matrix} \right]^T,
\end{equation}
where the $(i,l)$th entry of the submatrix ${\bf A}^{(j)}$ is equal to
\begin{equation}\label{Eq:matrixA1elmt}
{\bf A}^{(j)}_{i,l} = e^{j2\pi (\triangle F+n_i \triangle f) \hat\tau_l} {\bf a}^H(\hat\theta_l) {\bf a}(\hat\theta_j),
\end{equation}
where $j = 0,\dots,\hat L-1$, $i = 0,\dots,N_p-1$ and $l = 0,\dots,\hat L-1$.

For Type 2, $M_p=N_p$, and the $(i,l)$th entry of $A$ is equal to
\begin{equation}\label{Eq:matrixA1elmt}
{\bf A}_{i,l} = e^{j2\pi (\triangle F+n_i \triangle f) \hat\tau_l} {\bf a}^H(\hat\theta_l) {\bf a}(\hat\theta_{\underline i}),
\end{equation}
where $i = 0,\dots,N_p-1$, and $l = 0,\dots,\hat L-1$.

Given that the coefficient matrix $\bf A$ is also known at the user side, the user can refine the gains via LS estimation as
\begin{equation}\label{Eq:DLestg}
{\hat {\bf g}}^{\rm dl}= {\bf A}^{\dagger}{\bf y}^{\rm dl},
\end{equation}
where ${\bf A}^{\dagger}$ represents the pseudo-inverse of ${\bf A}$, and the dimension of the refined gain vector ${\hat {\bf g}}^{\rm dl}$ is still $\hat L$. The refined gains are then sent back to the BS. The feedback amount is independent of the number of antenna elements and subcarriers but is dependent on the number of detected propagation paths.

The BS obtains all the information required for the reconstruction of the downlink channel, namely, $({\hat g}^{\rm dl}_l,\hat\tau_l,\hat\theta_l),{l=0,\dots,\hat L-1}$. Specifically, the downlink multipath channel on all subcarriers and antennas is reconstructed as
\begin{equation}\label{Eq:Reconstruction}
{\bf h}^{\rm dl}=\sum\limits_{l = 0}^{{\hat L-1}} {{\hat g}^{\rm dl}_l e^{j2\pi \triangle F \hat\tau_l} {\bf p}(\hat\tau_l) \otimes {\bf a}(\hat\theta_l)}.
\end{equation}

For clarity, we briefly summarize the procedures used in the proposed efficient downlink channel reconstruction scheme as follows:
\begin{itemize}
  \item \textbf{Step 1:} Frequency-independent parameters estimation during the uplink sounding. The user sends uplink pilots to the BS and then the BS uses the extended triviriate NOMP algorithm to estimate the gain, delay, and angle of each propagation path of the channel.
  \item \textbf{Step 2:} Downlink gain refinement and feedback. The BS transmits the downlink pilots to the user and informs the user about the beamforming type and the uplink-estimated delays and angles. The user re-estimates the gains and then feeds them back to the BS.
  \item \textbf{Step 3:} Downlink channel reconstruction. The BS reconstructs the downlink channel by using the uplink-estimated delays and angles as well as the downlink-refined gains.
\end{itemize}

\section{Uplink Parameters Extraction}\label{Sec:NOMP}

In this section, we describe the extension of the NOMP algorithm in detail in order to fit the trivariate condition and obtain the three-tuple $(g^{\rm ul},\tau,\theta)$ from the uplink as mentioned in \emph{Section \ref{Sec:Reconstruction}.A}. We first introduce the rationale and stopping criterion of the trivariate NOMP algorithm and then evaluate its accuracy by using the lower bounds of the estimation errors.

For simplification, $\triangle f\tau$ is treated as a whole that is simplified by $\mu\in [0,1)$. Similarly, $\frac{d}{\lambda}\sin\theta$ is simplified by $\nu \in [0,1)$. Then, vector $\bf u$ is represented by
\begin{equation}\label{Eq:uvecNew}
{\bf u}(\mu,\nu) = {\bf p}(\mu)\otimes{\bf a}(\nu)
\end{equation}
in the subsequent part of this section, where
\begin{equation}\label{Eq:pvecNew}
{\bf p}(\mu) = \left[e^{-j2\pi \lfloor\frac{N}{2}\rfloor \mu},\dots, e^{j2\pi (\lceil\frac{N}{2}\rceil-1) \mu}\right]^H
\end{equation}
and
\begin{equation}\label{Eq:avecNew}
{\bf a}(\nu) = \left[e^{-j2\pi \lfloor\frac{M}{2}\rfloor \nu},\dots, e^{j2\pi (\lceil\frac{M}{2}\rceil-1) \nu}\right]^H.
\end{equation}
The three-tuples to be detected are transformed to $(g^{\rm ul},\mu,\nu)$ in the extended trivariate NOMP algorithm.

\subsection{Trivariate NOMP Algorithm}\label{Sec:Extended Algorithm}

NOMP is an iteration-based algorithm. In our extended version of this algorithm, a three-tuple of $(g^{\rm ul},\mu,\nu)$ is estimated in each iteration. The component made by this three-tuple is then removed from the observed pilot. At the end of the $i$th iteration, the residual is calculated as
\begin{equation}\label{Eq:NOMPresidual}
{\bf y}^{\rm ul}_{\rm r} = {\bf y}^{\rm ul} - \sum\limits_{l = 0}^{{i-1}}{{\tilde g}^{\rm ul}_l {\bf u}(\tilde\mu_l,\tilde\nu_l)},
\end{equation}
where $({\tilde g}^{\rm ul}_l,\tilde\mu_l,\tilde\nu_l),l=0,\dots,i-1$ are the estimated three-tuples in the previous $i$ iterations. Afterward, in the $(i+1)$th iteration, we estimate a new three-tuple by minimizing the new residual power $\|{\bf y}^{\rm ul}_{\rm r} - g^{\rm ul}{\bf u}(\mu,\nu)\|^2$, which is further translated to maximize the following function
\begin{equation}\label{Eq:Sfunction}
S(g^{\rm ul},\mu,\nu) = 2\mathcal{R}\left\{{\bf y}^{{\rm ul}H}_{\rm r} g^{\rm ul}{\bf u}(\mu,\nu)\right\} - |g^{\rm ul}|^2\|{\bf u}(\mu,\nu)\|^2.
\end{equation}

The working steps in the $(i+1)$th iteration of the extended trivariate NOMP algorithm are similar to those of the original algorithm in \cite{Mamandipoor2016}. We first briefly introduce the steps in the $(i+1)$th iteration of the extended algorithm, which are listed below.
\begin{itemize}
  \item Step 1: New Detection. Select the coarse estimates $\tilde\mu_i$ and $\tilde\nu_i$ from a 2D over-sampled angle-and-delay grid and then calculate ${\tilde g}^{\rm ul}_i$ from $\tilde\mu_i$ and $\tilde\nu_i$.
  \item Step 2: Single Refinement. Solely refine the coarsely estimated three-tuple $({\tilde g}^{\rm ul}_i,\tilde\mu_i,\tilde\nu_i)$ through the Newton refinement steps and add the obtained $({{\tilde g}^{\rm ul}_i}{'}, {\tilde\mu_i}{'}, {\tilde\nu_i}{'})$ into the set of the estimated three-tuples.
  \item Step 3: Cyclic Refinement. Cyclically refine the set of estimated three-tuples through the Newton refinement steps and obtain $({\tilde g}^{\rm ul}_l{''},\tilde\mu_l{''},\tilde\nu_l{''}), l=0,\dots,i$.
  \item Step 4: Gains Update. Retain the estimated delays and angles and update all the amplitudes through LS estimation $[{\tilde{\tilde g}}^{\rm ul}_0,{\tilde{\tilde g}}^{\rm ul}_1,\dots,{\tilde{\tilde g}}^{\rm ul}_i]^T = {\bf U}^\dagger {\bf y}^{\rm ul}$, where ${\bf U} = \left[{\bf u}(\tilde\mu_0{''},\tilde\nu_0{''}), {\bf u}(\tilde\mu_1{''},\tilde\nu_1{''}), \dots ,{\bf u}(\tilde\mu_i{''},\tilde\nu_i{''})\right]$.
\end{itemize}

Details about the extensions of this work are then provided. Given that the delay and the angle jointly determine the channel phase and can be represented using a common vector ${\bf u}(\mu,\nu)$, these two parameters are estimated and refined together in our design. This combination results in the 2D grid and the extended Newton step.

\subsubsection{2D Grid}

The coarse estimates $(\tilde\mu_i,\tilde\nu_i)$ in Step 1 are chosen from a 2D angle-and-delay grid $\Omega$, which consists of $\gamma_1 N \times\gamma_2 M$ over-sampled grid points
\begin{equation}\label{Eq:Grid}
\Omega = \left\{\left(\frac{k_1}{\gamma_1 N}, \frac{k_2}{\gamma_2 M}\right): k_1 = 0,1,\dots,{\gamma_1 N-1};
k_2 = 0,1,\dots,{\gamma_2 M-1}\right\},
\end{equation}
where $\gamma_1$ and $\gamma_2$ are the over-sampling rates for the delay grid and the angle grid, respectively. Each point in the grid forms a vector ${\bf u}({k_1}/{(\gamma_1 N)}, {k_2}/{(\gamma_2 M)})$. The coarsely estimated delay and angle are obtained by exhaustively searching the grid points as follows:
\begin{equation}\label{Eq:DetectNew}
(\tilde\mu_i,\tilde\nu_i) = \arg\max_{(\mu,\nu)\in\Omega}{\frac{|{\bf u}^H(\mu,\nu){\bf y}^{\rm ul}_{\rm r}|^2}{\|{\bf u}(\mu,\nu)\|^2}}.
\end{equation}
Next, the gain is calculated as
\begin{equation}\label{Eq:gcal}
{\tilde g}^{\rm ul}_i = \frac{{\bf u}^H(\tilde\mu_i,\tilde\nu_i){\bf y}^{\rm ul}_{\rm r}}{\|{\bf u}(\tilde\mu_i,\tilde\nu_i)\|^2}.
\end{equation}

\subsubsection{Extended Newton Step}

With one more parameter than the original Newton step, the extended Newton step in Steps 2 and 3 can refine the delay and the angle simultaneously. In this bivariate problem, the coarsely estimated $(\tilde\mu_i,\tilde\nu_i)$ are refined through
\begin{equation}\label{Eq:RefineOne}
\left[ \begin{matrix} {\tilde\mu'}_i\\ {\tilde\nu'}_i \end{matrix} \right]=\left[ \begin{matrix} {\tilde\mu_i}\\ {\tilde\nu_i} \end{matrix} \right]- {\ddot{\bf S}\left({\tilde g}^{\rm ul}_i,\tilde\mu_i,\tilde\nu_i\right)}^{-1} {\dot{\bf S}\left({\tilde g}^{\rm ul}_i,\tilde\mu_i,\tilde\nu_i\right)},
\end{equation}
where
\begin{equation}\label{Eq:NOMPdotS}
\dot{\bf S}\left(g^{\rm ul},\mu,\nu\right)=\left[\begin{matrix} {\frac{\partial S}{\partial\mu}} \\ \frac{\partial S}{\partial\nu} \end{matrix} \right]
\end{equation}
is the first-order partial derivative vector, and
\begin{equation}\label{Eq:NOMPddotS}
\ddot{\bf S}\left(g^{\rm ul},\mu,\nu\right)=\left[ \begin{matrix} \frac{\partial^2 S}{\partial\mu^2} & \frac{\partial^2 S}{\partial\mu\partial\nu}\\ \frac{\partial^2 S}{\partial\nu\partial\mu} & \frac{\partial^2 S}{\partial\nu^2} \end{matrix} \right]
\end{equation}
is the second-order partial derivative matrix. According to \eqref{Eq:Sfunction}, we can write the first-order partial derivatives of $S(g^{\rm ul},\mu,\nu)$ as
\begin{equation}\label{Eq:1orderpartialS}
\frac{\partial S}{\partial x} = 2\mathcal{R}\left\{g^{\rm ul}\left({\bf y}^{\rm ul}_{\rm r}- g^{\rm ul}{\bf u}\right)^H \frac{\partial {\bf u}}{\partial x} \right\} ,
\end{equation}
where $x$ can be $\mu$ or $\nu$. The second-order partial derivative of $S(g^{\rm ul},\tau,\theta)$ is calculated as
\begin{equation}\label{Eq:mixedpartialS}
\frac{\partial^2 S}{\partial x_1 \partial x_2}= 2\mathcal{R}\left\{g^{\rm ul}\left({\bf y}^{\rm ul}_{\rm r}- g^{\rm ul}{\bf u}\right)^H \frac{\partial^2 {\bf u}}{\partial x_1 \partial x_2} - |g^{\rm ul}|^2 \frac{\partial {\bf u}^H}{\partial x_2} \frac{\partial {\bf u}}{\partial x_1} \right\},
\end{equation}
where $x_1$ and $x_2$ can be $\mu$ or $\nu$.
One requirement is that $S(g^{\rm ul},\mu,\nu)$ is locally convex in the neighborhood of $(\tilde\mu,\tilde\nu)$ because we are pursuing its maximum value. Therefore, the Newton refinement \eqref{Eq:RefineOne} will be carried out if, and only if, $\det\left(\ddot{\bf S}\left({\tilde g}^{\rm ul}, \tilde\mu,\tilde\nu\right)\right)>0$ and the first element of $\ddot{\bf S}\left({\tilde g}^{\rm ul},\tilde\mu,\tilde\nu\right)$ is lower than 0. At the end of each Newton step, the gain is also updated using \eqref{Eq:gcal}.

Note that the 2D grid and the extended Newton step are the required major extensions to the original NOMP algorithm. Other minor modifications to fit the trivariate condition are trivial and omitted here.

\subsection{Stopping Criterion}

One major challenge is that BS does not know the number of propagation paths in the real channel, a detail which directly determines when the iteration process is terminated. If the estimated three-tuples are precise enough, all the paths will be accurately identified and the residual will be reduced to the noise at the end of the NOMP algorithm, i.e., ${\bf y}^{\rm ul}_{\rm r}\approx{\bf z}^{\rm ul}$. In this study, this assumption is utilized to design the stopping criterion.

\subsubsection{Power-based Criterion}
One choice is to terminate the NOMP iterations when the residual power is less than the total noise power. Since the noise power is normalized to 1, if
\begin{equation}\label{Eq:Stop1}
\|{\bf y}^{\rm ul}_{\rm r} \|^2 < \kappa,
\end{equation}
where
\begin{equation}\label{Eq:StopCriterion1}
\kappa = \mathbb{E}\{\|{\bf z}^{\rm ul}\|^2 \}= M N,
\end{equation}
then the trivariate NOMP algorithm will be stopped.

\subsubsection{False-Alarm-Rate-based Criterion}
Alternatively, we can design the stopping criterion based on the false alarm rate. If we ``detect'' a fake path that does not exist, then we say that a fault appears. This situation happens when all the paths have been detected but the algorithm is still not working. The following theorem introduces the false-alarm-rate-based stopping criterion.

\begin{theorem}\label{Thrm:criterion}
If the trivariate NOMP algorithm terminates when
\begin{equation}\label{Eq:Stop2}
\|{\bf u}(\mu,\nu)^H{\bf y}^{\rm ul}_{\rm r}\|^2 < \kappa'
\end{equation}
holds for all grid points
\begin{equation}\label{Eq:DFTgrids}
(\mu,\nu)\in\left\{\left(\frac{k_1{'}}{N}, \frac{k_2{'}}{M}\right): k_1{'} = 0,1,\dots,{N-1};
k_2{'} = 0,1,\dots,{M-1}\right\},
\end{equation}
where
\begin{equation}\label{Eq:StopCriterion2}
\kappa' = \ln(M N)-\ln(-\ln(1-P_{\rm fa})),
\end{equation}
then the false alarm rate can be approximated by $P_{\rm fa}$.
\end{theorem}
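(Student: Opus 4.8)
The plan is to analyze the stopping rule under the null hypothesis that every genuine path has already been extracted, so that the residual is (approximately) pure noise, $\mathbf{y}^{\rm ul}_{\rm r}\approx\mathbf{z}^{\rm ul}\sim\mathcal{CN}(\mathbf 0,\mathbf I_{MN})$. A false alarm is exactly the event that the algorithm does \emph{not} stop, i.e. that the matched-filter energy $|\mathbf u(\mu,\nu)^H\mathbf y^{\rm ul}_{\rm r}|^2/\|\mathbf u(\mu,\nu)\|^2$ exceeds $\kappa'$ at some DFT grid point in \eqref{Eq:DFTgrids}. (Here $\|\mathbf u\|^2=\|\mathbf p\|^2\|\mathbf a\|^2=MN$, so the quantity tested in \eqref{Eq:Stop2} is read as this normalized projection.) I would therefore reduce the theorem to computing the distribution of the maximum of these $MN$ grid statistics and choosing $\kappa'$ so that the probability of at least one exceedance equals $P_{\rm fa}$.

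The key structural fact I would establish first is that the $MN$ Kronecker vectors $\mathbf u(k_1'/N,k_2'/M)$ on the DFT grid are mutually orthogonal and span $\mathbb C^{MN}$. This follows because the DFT-grid delay vectors $\mathbf p(k_1'/N)$ are orthogonal, the DFT-grid steering vectors $\mathbf a(k_2'/M)$ are orthogonal, and $(\mathbf p_1\otimes\mathbf a_1)^H(\mathbf p_2\otimes\mathbf a_2)=(\mathbf p_1^H\mathbf p_2)(\mathbf a_1^H\mathbf a_2)$. Consequently the normalized projections $\tilde z_{k_1',k_2'}=\mathbf u^H\mathbf z^{\rm ul}/\|\mathbf u\|$ are the coordinates of a white Gaussian vector in an orthonormal basis, hence i.i.d. $\mathcal{CN}(0,1)$, and each statistic $|\tilde z_{k_1',k_2'}|^2$ is i.i.d. exponential with unit mean, so $P(|\tilde z|^2\le t)=1-e^{-t}$.

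With independence in hand, the probability of \emph{no} false alarm factorizes:
\[
P(\text{no fault})=\prod_{k_1',k_2'}P\!\left(|\tilde z_{k_1',k_2'}|^2<\kappa'\right)=\left(1-e^{-\kappa'}\right)^{MN}.
\]
I would then impose $(1-e^{-\kappa'})^{MN}=1-P_{\rm fa}$ and invert. Writing $1-e^{-\kappa'}=(1-P_{\rm fa})^{1/(MN)}$ and using $(1-P_{\rm fa})^{1/(MN)}\approx 1+\tfrac{1}{MN}\ln(1-P_{\rm fa})$ for large $MN$ gives $e^{-\kappa'}\approx\frac{-\ln(1-P_{\rm fa})}{MN}$, and taking logarithms yields $\kappa'\approx\ln(MN)-\ln(-\ln(1-P_{\rm fa}))$, which is \eqref{Eq:StopCriterion2}.

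I expect the main obstacle to be justifying the word ``approximated'' rather than the algebra above. Three approximations enter: (i) the residual after the Step-4 LS update is the projection of $\mathbf z^{\rm ul}$ onto the orthogonal complement of the $\hat L$ estimated signal directions, not $\mathbf z^{\rm ul}$ itself, so whiteness holds only up to the removal of $\hat L\ll MN$ dimensions; (ii) the test is evaluated on the coarse \emph{orthogonal} DFT grid, where independence is exact, whereas detection in the algorithm searches an oversampled grid and Newton-refines, so the true supremum over $(\mu,\nu)$ can slightly exceed the DFT-grid maximum; and (iii) the linearization of $(1-P_{\rm fa})^{1/(MN)}$ requires both $MN$ and $\kappa'$ large (small per-bin exceedance probability). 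I would argue that all three are negligible in the regime of interest, $MN\gg\hat L$ and small $P_{\rm fa}$, so the per-bin exponential/independence model of \cite{Mamandipoor2016} carries over to the trivariate Kronecker setting and the false-alarm rate is indeed well approximated by $P_{\rm fa}$.
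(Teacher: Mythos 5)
Your proof is correct and follows essentially the same route as the paper's: model the post-detection residual as white Gaussian noise, treat the $MN$ DFT-grid statistics as i.i.d.\ unit-mean exponentials (the paper phrases this as the non-over-sampled points giving a Fourier transform that preserves the noise statistics, which is precisely your Kronecker-orthogonality argument), and invert the distribution of their maximum for large $MN$ to arrive at $\kappa'=\ln(MN)-\ln(-\ln(1-P_{\rm fa}))$. The only differences are cosmetic---the paper first applies the Gumbel-type limit $\left(1-\tfrac{1}{MN}e^{-Z}\right)^{MN}\approx \exp\{-\exp\{-Z\}\}$ and then substitutes the threshold, whereas you solve $\left(1-e^{-\kappa'}\right)^{MN}=1-P_{\rm fa}$ exactly and then linearize, the same approximation in a different order---while your explicit normalization by $\|{\bf u}\|^2=MN$ and your itemized accounting of the residual-whiteness, oversampling, and linearization errors make the same argument slightly more careful than the paper's.
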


\begin{proof}
As the grid points listed in \eqref{Eq:DFTgrids} are non-over-sampled points, the corresponding values of ${\bf u}(\mu,\nu)^H{\bf y}^{\rm ul}_{\rm r}$ can be viewed as the Fourier transformed values of ${\bf y}^{\rm ul}_{\rm r}$ and remain the same statistic property of ${\bf y}^{\rm ul}_{\rm r}$. Since ${\bf y}^{\rm ul}_{\rm r}\approx{\bf z}^{\rm ul}$ when all the paths are precisely detected, the condition \eqref{Eq:Stop2} can be translated to
\begin{equation}\label{Eq:ThrmProof1}
\|{\bf z}^{\rm ul}\|^2_\infty < \kappa'.
\end{equation}
From \cite{Eisenberg2008}, we know that
\begin{equation}\label{Eq:maximum}
\mathbb{E}\left\{\|{\bf z}^{\rm ul}\|^2_\infty\right\} \approx \ln(MN)
\end{equation}
when $MN$ grows large. Denoting $\bar z=\|{\bf z}^{\rm ul}\|^2_\infty - \ln(MN)$, we can derive that
\begin{equation}\label{Eq:xcdf}
\mathcal{P}\left\{\bar z \le Z \right\} = \mathcal{P}\left\{\|{\bf z}^{\rm ul}\|^2_\infty < Z+\ln(MN) \right\}.
\end{equation}
Given that each element of $\bf z$ is i.i.d., it holds that
\begin{equation}\label{Eq:iidGaussian}
\mathcal{P}\left\{\|{\bf z}^{\rm ul}\|^2_\infty < Z+\ln(MN) \right\} = \mathcal{P}\left\{|z_1|^2 \le Z+\ln(MN)\right\}^{MN}=\left(1-\frac{1}{MN}e^{-Z}\right)^{MN},
\end{equation}
where $z_1$ is the first element of ${\bf z}^{\rm ul}$ and is a Gaussian variable with zero mean and unit variance. As
\begin{equation}\label{Eq:explimit}
\exp\{\xi\} = \lim_{n \to \infty}{\left(1+\frac{\xi}{n}\right)^n},
\end{equation}
we have
\begin{equation}\label{Eq:elimit}
\left(1-\frac{1}{MN}e^{-Z}\right)^{MN}\approx \exp\{-\exp\{-Z\}\}
\end{equation}
when $MN$ grows without limit. By applying \eqref{Eq:iidGaussian}--\eqref{Eq:elimit} into \eqref{Eq:xcdf} and denoting $\kappa'=Z+\ln(MN)$, we can obtain
\begin{equation}\label{Eq:zmax}
\mathcal{P}\left\{\|{\bf z}^{\rm ul}\|^2_\infty \le \kappa' \right\}\approx \exp\{-\exp\{\ln(MN)-\kappa'\}\}.
\end{equation}
If we further apply \eqref{Eq:StopCriterion2}, then it holds that
\begin{equation}\label{Eq:zmax}
\mathcal{P}\left\{\|{\bf z}^{\rm ul}\|^2_\infty > \kappa' \right\}=1-\mathcal{P}\left\{\|{\bf z}^{\rm ul}\|^2_\infty \le \kappa' \right\}\approx P_{\rm fa},
\end{equation}
which means that the false alarm rate approximates $P_{\rm fa}$.
\end{proof}

After the iterations stop, the final estimation results of the trivariate NOMP algorithm are denoted as $({\hat g^{\rm ul}_l}, {\hat\mu_l}, {\hat\nu_l}), l=0,\dots,\hat L-1$. These estimation results are further translated to $({\hat g^{\rm ul}_l}, {\hat\tau_l}, {\hat\theta_l}), l=0,\dots,\hat L-1$, as mentioned in \emph{Section \ref{Sec:Reconstruction}}.

\subsection{Estimation Accuracy}

To evaluate the estimation accuracy of the delay and the angle, we calculate the respective normalized mean square errors (MSEs) of $\mu$ and $\nu$ by
\begin{equation}\label{Eq:delayError}
\varepsilon_\mu = \mathbb{E}\left\{ \frac{|\hat\mu-\mu|^2}{1/{N^2}} \right\}
\end{equation}
and
\begin{equation}\label{Eq:angleError}
\varepsilon_\nu = \mathbb{E}\left\{ \frac{|\hat\nu-\nu|^2}{1/{M^2}} \right\}.
\end{equation}
The following theorem is used to study the estimation accuracy of the extended trivariate NOMP algorithm.

\begin{theorem}\label{Thrm:CRB}
The normalized MSEs of the delay and angle are lower bounded, respectively, by
\begin{equation}\label{Eq:delayTheorem}
\varepsilon_\mu \ge \frac{3N}{{\rm SNR}\cdot2\pi^2 M (N^2-1)}
\end{equation}
and
\begin{equation}\label{Eq:angleTheorem}
\varepsilon_\nu \ge \frac{3M}{{\rm SNR}\cdot2\pi^2 N (M^2-1)}.
\end{equation}
\end{theorem}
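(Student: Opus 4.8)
The plan is to recognize Theorem~\ref{Thrm:CRB} as a Cram\'er--Rao bound statement: for any unbiased estimator the mean square error is bounded below by the corresponding diagonal entry of the inverse Fisher information matrix, so it suffices to compute the Fisher information for a single path and read off the $(\mu,\nu)$ entries. First I would isolate the single-path observation model ${\bf y}=g\,{\bf u}(\mu,\nu)+{\bf z}$ with ${\bf z}\sim\mathcal{CN}({\bf 0},{\bf I})$ (unit noise variance, matching the normalization in \eqref{Eq:ULpilotsArray}), and treat the unknowns as the real and imaginary parts of the complex gain $g$ together with the two real frequencies $\mu,\nu$. Because the noise covariance does not depend on the parameters, the complex-Gaussian (Slepian--Bangs) formula gives the entries $[{\bf J}]_{ab}=2\mathcal{R}\{\partial_a{\bf m}^H\,\partial_b{\bf m}\}$ with mean ${\bf m}=g\,{\bf u}$, where the gain directions contribute $\partial{\bf m}\in\{{\bf u},\,j{\bf u}\}$ while $\partial_\mu{\bf m}=g\,\partial_\mu{\bf u}$ and $\partial_\nu{\bf m}=g\,\partial_\nu{\bf u}$.

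The central computation exploits the Kronecker structure ${\bf u}={\bf p}(\mu)\otimes{\bf a}(\nu)$, so that $\partial_\mu{\bf u}=\partial_\mu{\bf p}\otimes{\bf a}$ and $\partial_\nu{\bf u}={\bf p}\otimes\partial_\nu{\bf a}$, and every inner product factors into a $\mu$-part times a $\nu$-part. I would record the elementary sums $\|{\bf p}\|^2=N$, ${\bf p}^H\partial_\mu{\bf p}=j2\pi\sum_n n$, and $\|\partial_\mu{\bf p}\|^2=4\pi^2\sum_n n^2$ over the range $n=-\lfloor N/2\rfloor,\dots,\lceil N/2\rceil-1$ (with the analogous $\|{\bf a}\|^2=M$, ${\bf a}^H\partial_\nu{\bf a}=j2\pi\sum_m m$, $\|\partial_\nu{\bf a}\|^2=4\pi^2\sum_m m^2$). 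The key algebraic fact is the identity $\sum_n n^2-(\sum_n n)^2/N=N(N^2-1)/12$, which holds for any block of $N$ consecutive integers regardless of whether the range is symmetric about zero.

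The main obstacle is that the gain is a complex nuisance parameter and, for even $N$ or $M$, the index ranges are not symmetric, so a priori $\mu$ and $\nu$ couple both to the gain phase and to each other. I would handle this by eliminating the gain via the Schur complement, which amounts to replacing each frequency derivative by its component orthogonal to ${\bf u}$, i.e. forming $J^{\rm eff}_{xy}=2|g|^2\mathcal{R}\{\partial_x{\bf u}^H{\bf P}^\perp\partial_y{\bf u}\}$ with ${\bf P}^\perp={\bf I}-{\bf u}{\bf u}^H/\|{\bf u}\|^2$ for $x,y\in\{\mu,\nu\}$. The delicate step I expect to require care is verifying that this projection does two things at once: it converts the raw moment $\sum_n n^2$ into the centered moment $\sum_n n^2-(\sum_n n)^2/N$ in the diagonal entry, and it forces the off-diagonal $J^{\rm eff}_{\mu\nu}$ to vanish exactly, since $\partial_\mu{\bf u}^H\partial_\nu{\bf u}$ and the correction $(\partial_\mu{\bf u}^H{\bf u})({\bf u}^H\partial_\nu{\bf u})/\|{\bf u}\|^2$ both equal $4\pi^2(\sum_n n)(\sum_m m)$ and cancel. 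With the effective $2\times2$ information thus diagonalized, it reads $J^{\rm eff}_{\mu\mu}=2\pi^2|g|^2MN(N^2-1)/3$ and $J^{\rm eff}_{\nu\nu}=2\pi^2|g|^2NM(M^2-1)/3$.

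Finally I would identify ${\rm SNR}=|g|^2$ (the per-element signal-to-noise ratio, since ${\bf u}$ has unit-modulus entries and the noise has unit variance), invert the diagonal information to obtain ${\rm CRB}(\mu)=1/J^{\rm eff}_{\mu\mu}$ and ${\rm CRB}(\nu)=1/J^{\rm eff}_{\nu\nu}$, and rescale by the normalizations $N^2$ and $M^2$ from \eqref{Eq:delayError}--\eqref{Eq:angleError}. Since $\mathbb{E}\{|\hat\mu-\mu|^2\}\ge{\rm CRB}(\mu)$ for any unbiased estimator, this yields $\varepsilon_\mu\ge 3N/({\rm SNR}\cdot2\pi^2M(N^2-1))$ and, by the same argument with the roles of $N$ and $M$ interchanged, the claimed bound for $\varepsilon_\nu$, completing the proof.
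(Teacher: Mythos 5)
Your proposal is correct, and its skeleton coincides with the paper's: both reduce to the single-path model ${\bf y} = g\,{\bf u}(\mu,\nu)+{\bf z}$, compute a $2\times 2$ Fisher information over $(\mu,\nu)$ that factors through the Kronecker structure of ${\bf u}$, invert its diagonal to get the CRBs, identify ${\rm SNR}=|g|^2$, and rescale by $N^2$ and $M^2$. The genuine difference is your treatment of the complex gain. The paper quotes the Fisher matrix \eqref{Eq:FisherDefinition} from its reference, in which $g$ enters only as a known multiplicative factor $2|g^{\rm ul}|^2$, and then asserts the diagonal closed form \eqref{Eq:Fisher}; that form is exact only when the index ranges are symmetric (odd $N$ and $M$), because for even $N$ one has $\sum_n n = -N/2 \neq 0$, so the raw moment is $\sum_n n^2 = N(N^2+2)/12$ rather than $N(N^2-1)/12$, and the off-diagonal entry proportional to $(\sum_n n)(\sum_m m)$ does not vanish. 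Your Schur-complement elimination of the real and imaginary parts of $g$ repairs both issues at once: projecting $\partial_\mu{\bf u}$ and $\partial_\nu{\bf u}$ orthogonally to ${\bf u}$ converts raw moments into centered ones, for which $\sum_n n^2-(\sum_n n)^2/N = N(N^2-1)/12$ holds for every block of $N$ consecutive integers, and forces the cross term to cancel identically, so your constants are valid for all $N,M$ with no symmetry assumption. This is precisely the mechanism that makes the paper's final expressions \eqref{Eq:CRBdelay} and \eqref{Eq:CRBaoa} correct even though its intermediate matrix is stated loosely (indeed, the diagonal entries of \eqref{Eq:Fisher} appear swapped relative to the $(\mu,\nu)$ ordering subsequently used in \eqref{Eq:CRBdelay}, an inconsistency your derivation never encounters, and the unbiasedness hypothesis behind the CRB is implicit in both arguments). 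In short, the paper's route buys brevity by outsourcing the nuisance-gain issue to the cited reference; yours buys a self-contained derivation that is exact for arbitrary array and subcarrier sizes.
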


\begin{proof}
Cramer-Rao bound (CRB) can be interpreted as a lower bound of the variance of the estimator. The CRBs of the single path case, ${\bf y}^{\rm ul} = g^{\rm ul}{\bf u}(\mu,\nu)+{\bf z}^{\rm ul}$, are introduced, where each element of ${\bf z}$ is i.i.d Gaussian with zero mean and unit variance. According to \cite{Ramasamy2014}, the Fisher information matrix is calculated by
\begin{equation}\label{Eq:FisherDefinition}
{\bf F}(\mu,\nu) = 2|g^{\rm ul}|^2\mathcal{R}\left\{ \left[ \begin{matrix}
\frac{\partial {\bf u}^H}{\partial\mu} \frac{\partial {\bf u}}{\partial\mu}
& \frac{\partial {\bf u}^H}{\partial\mu} \frac{\partial {\bf u}}{\partial\nu} \\
\frac{\partial {\bf u}^H}{\partial\nu} \frac{\partial {\bf u}}{\partial\mu}
& \frac{\partial {\bf u}^H}{\partial\nu} \frac{\partial {\bf u}}{\partial\nu} \end{matrix} \right] \right\}.
\end{equation}
Applying \eqref{Eq:uvec} into \eqref{Eq:FisherDefinition}, we can get an analytical expression of the Fisher information matrix as
\begin{equation}\label{Eq:Fisher}
{\bf F}(\mu,\nu) = 2|g^{\rm ul}|^2 \left[ \begin{matrix} \frac{\pi^2 M N(M^2-1)}{3} & 0 \\ 0 & \frac{\pi^2 M N (N^2-1)}{3} \end{matrix} \right].
\end{equation}
Then, the CRB of the delay is expressed as
\begin{equation}\label{Eq:CRBdelay}
{\rm CRB}_\mu = {{\bf F}_{1,1}^{-1}(\mu,\nu)} = \frac{3}{2|g^{\rm ul}|^2\pi^2 MN (N^2-1)}.
\end{equation}
Similarly, the CRB of the angle is
\begin{equation}\label{Eq:CRBaoa}
{\rm CRB}_\nu = {{\bf F}_{2,2}^{-1}(\mu,\nu)} = \frac{3}{2|g^{\rm ul}|^2\pi^2 MN (M^2-1)}.
\end{equation}
With $a$ containing the pilot and the noise power equaling 1, the signal-to-noise ratio (SNR) here is measured through $|g^{\rm ul}|^2$, that is, $|g^{\rm ul}|^2={\rm SNR}$. Moreover, ${\rm CRB}_\mu \le \mathbb{E}\left\{{|\hat\mu-\mu|^2} \right\}$ and ${\rm CRB}_\nu \le \mathbb{E}\left\{{|\hat\nu-\nu|^2}\right\}$, from where we obtain \eqref{Eq:delayTheorem} and \eqref{Eq:angleTheorem}, respectively.
\end{proof}

When we set $M=1$, the problem is reduced to the bivariate case that only the gain and the delay are to be estimated. If ${\bf p}(\mu)=\left[e^{-j\lfloor{N/2}\rfloor \mu},\dots, e^{j (\lceil{N/2}\rceil-1) \mu}\right]^H/{\sqrt N}$, then the CRB of delay is written as
\begin{equation}\label{Eq:CRBdelay1}
{\rm CRB}_\mu^{(M=1)} = \frac{6}{{\rm SNR} (N^2-1)},
\end{equation}
which is exactly in accordance with the CRB bound given in \cite{Mamandipoor2016}. It proves the correctness of \emph{Theorem \ref{Thrm:CRB}}.

\begin{corollary}\label{Cor:CRB}
When $N$ or $M$ grows large, the lower bounds of the normalized MSEs of the delay and angle coincide, i.e.,
\begin{equation}\label{Eq:CRBCorollary}
\varepsilon_\mu,\varepsilon_\nu \ge \frac{3}{{\rm SNR}\cdot2\pi^2 MN}.
\end{equation}
\end{corollary}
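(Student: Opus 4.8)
The plan is to derive this corollary directly from the two lower bounds already established in Theorem~\ref{Thrm:CRB}, treating it as a short asymptotic simplification rather than a fresh computation. The essential observation is that the only way in which the bounds \eqref{Eq:delayTheorem} and \eqref{Eq:angleTheorem} differ from the claimed common expression $\frac{3}{{\rm SNR}\cdot 2\pi^2 MN}$ is through the factors $\frac{N}{N^2-1}$ and $\frac{M}{M^2-1}$, respectively. Thus the entire argument reduces to controlling these two ratios in the large-dimension regime, and I can assume \eqref{Eq:delayTheorem} and \eqref{Eq:angleTheorem} as given.

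First I would rewrite the delay bound \eqref{Eq:delayTheorem} in the factored form
\begin{equation}
\varepsilon_\mu \ge \frac{3}{{\rm SNR}\cdot 2\pi^2 MN}\cdot\frac{N^2}{N^2-1},
\end{equation}
and note that $\frac{N^2}{N^2-1} = 1 + \frac{1}{N^2-1}\to 1$ as $N\to\infty$. Hence for large $N$ the right-hand side converges to $\frac{3}{{\rm SNR}\cdot 2\pi^2 MN}$, which gives the delay half of \eqref{Eq:CRBCorollary}. Symmetrically, I would rewrite \eqref{Eq:angleTheorem} as $\varepsilon_\nu \ge \frac{3}{{\rm SNR}\cdot 2\pi^2 MN}\cdot\frac{M^2}{M^2-1}$ and let $M\to\infty$, so that $\frac{M^2}{M^2-1}\to 1$ yields the angle half. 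Both halves therefore collapse onto the same asymptotic expression.

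The one point that needs care — and the only genuine subtlety here — is that the two halves of the statement invoke \emph{different} limits: the delay bound collapses to the common form precisely when $N$ grows large (since the factor $N^2-1$ sits in the delay CRB), whereas the angle bound collapses precisely when $M$ grows large. This asymmetry is exactly what the phrase ``$N$ or $M$ grows large'' encodes, and I would make it explicit so the reader sees that $\varepsilon_\mu$ is governed by $N$ and $\varepsilon_\nu$ by $M$. Since no additional estimation enters, there is no real obstacle beyond stating these two limits cleanly; the corollary follows immediately from Theorem~\ref{Thrm:CRB}.
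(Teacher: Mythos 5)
Your proposal is correct and follows essentially the same route as the paper: the paper's proof likewise factors out the common term, notes that $N^2/(N^2-1)\approx 1$ as $N$ grows large so that \eqref{Eq:delayTheorem} approaches \eqref{Eq:CRBCorollary}, and then applies the symmetric argument in $M$ for \eqref{Eq:angleTheorem}. Your explicit remark that the delay bound is governed by the limit in $N$ while the angle bound is governed by the limit in $M$ is a useful clarification but does not change the argument.
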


\begin{proof}
It holds that $N^2/{(N^2-1)} \approx 1$ when $N$ grows large. Then \eqref{Eq:delayTheorem} approaches \eqref{Eq:CRBCorollary}. Similarly, \eqref{Eq:angleTheorem} approaches \eqref{Eq:CRBCorollary} when $M$ grows large.
\end{proof}

{\em Remark}: From \emph{Theorem \ref{Thrm:CRB}} and \emph{Corollary \ref{Cor:CRB}}, we can find that the bounds can be further lowered if the number of subcarriers occupied by the pilots or the number of BS antenna elements increases. This is because with more observed samples, we can see more details about the spatial channel. What should be emphasized are the preconditions of high estimation accuracy, i.e., the angles and delays of different paths are well separated and that the number of paths is far less than $M$ or $N$. In addition, only if the channel satisfies these preconditions can the algorithm achieve lower-bound performances.

\begin{figure}
  \centering
  \includegraphics[scale=0.5]{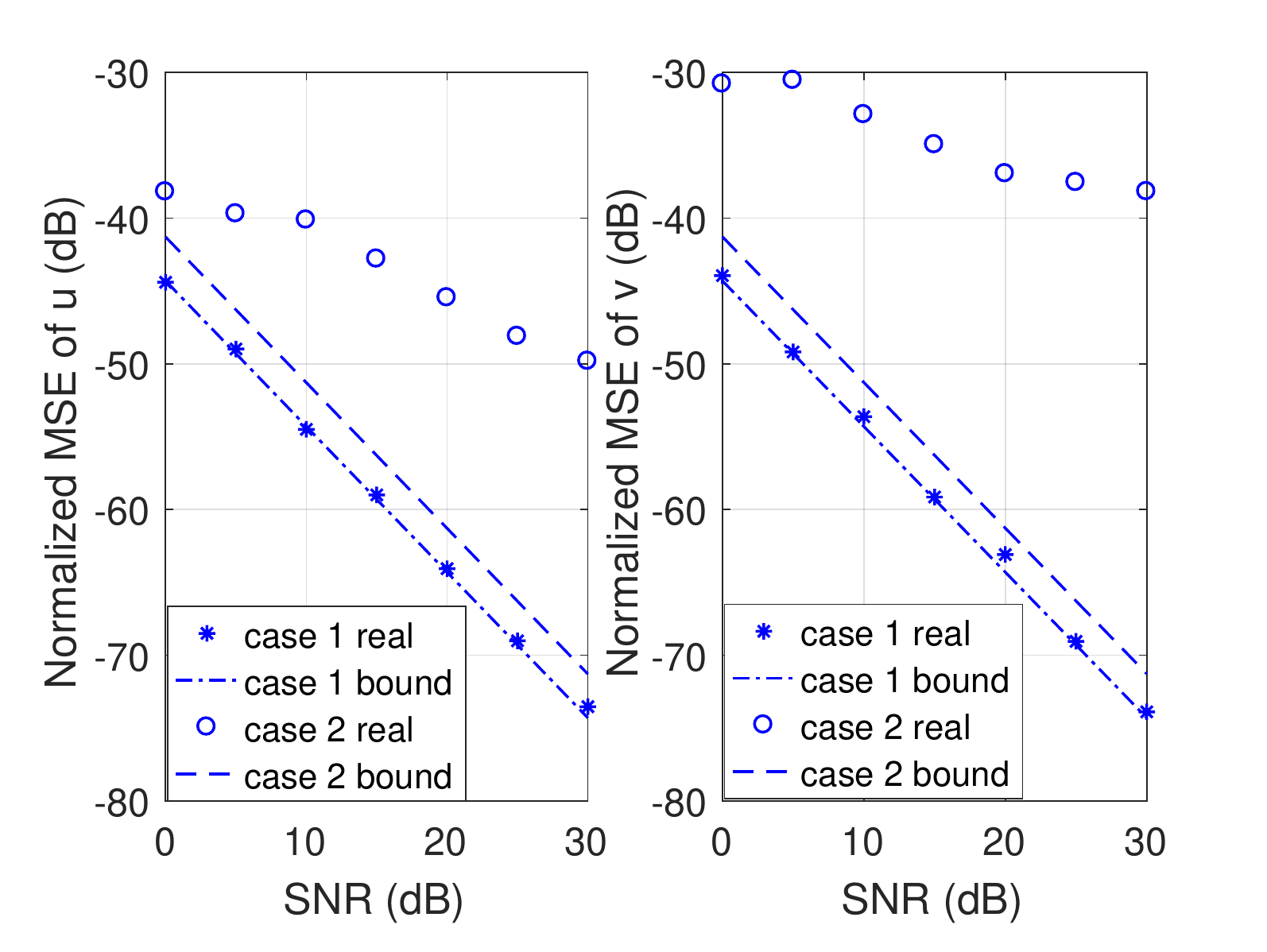}
  \caption{Comparison of the practical normalized MSEs and the lower bounds for $\mu$ and $\nu$. Case 1: $M=32$, $N=128$; Case 2: $M=32$, $N=64$.} \label{Fig:CRBresults}
\end{figure}

The results in Fig.~\ref{Fig:CRBresults} provide an intuitive comparison of the normalized MSEs with the derived lower bounds. The circles and stars represent the practical MSEs of the trivariate NOMP algorithm, and the dotted lines are the lower bounds. We set $\gamma_1=\gamma_2=2$. A total of 15 equal-power paths are present in the channel. The minimum separations among the delays and the angles are no less than $1/{N}$ and $1/M$, respectively. We first evaluate case 1, where $M=32$ and $N=128$. The values of $M$ and $N$ satisfy the condition in {\em Corollary \ref{Cor:CRB}} and we find that the lower bounds of the delay and the angle are nearly the same. Besides, the estimation accuracy is enhanced proportionally with the increase of SNR. The practical MSEs closely coincide with their theoretical lower bounds for both $\mu$ and $\nu$, which demonstrates the high accuracy of the trivariate NOMP algorithm. Moreover, even though $M \ll N$, the practical estimation accuracy of $\nu$ is not inferior to that of $\mu$ because of the well-separated spatial angles of each path and the significantly lower number of paths compared to $M$ or $N$. The results in Fig.~\ref{Fig:CRBresults} also compare the performances of the algorithm in case 2, where $M=32$ and $N=64$, that is, the number of subcarriers is half of that in case 1. The results demonstrate that the practical estimation accuracy degrades and the MSE lines deviate with the theoretical lower bounds. The lower bounds are accessed when the observations are far more than the propagation paths. Despite this, the MSEs of $\mu$ and $\nu$ are below $-30$ dB, demonstrating that the practical estimation accuracy of the delay and angle remain high.

\section{Performance Evaluation}\label{Sec:Results}

In this section, we evaluate the performance of the proposed efficient downlink channel reconstruction scheme. We first discuss our computer simulation results and then move on to our hardware OTA tests for validation.

\subsection{Simulation Results}\label{Sec:MatlabResults}

Computer simulations are realized through MATLAB. For the NOMP algorithm, the over-sampling rates of the delay and angle are set to 2 and 4, respectively. One round of single refinement and three rounds of cyclic refinement are implemented during each NOMP iteration. The number of FFT points is set as 2048 and the subcarrier spacing is set as 75 kHz. Note that we infer the out-of-band or downlink channel solely using the in-band or uplink derived gains, delays and angles as suggested in \cite{Vasisht2016Eliminating}, while we reconstruct the channel by utilizing the out-of-band or downlink refined gains and in-band or uplink derived delays and angles as suggested by the proposed efficient reconstruction scheme. Channel inference is equivalent to channel reconstruction for the in-band or the uplink, and both are realized by applying the NOMP estimated gains, delays, and angles in the channel model.

\begin{figure}
  \centering
  \includegraphics[scale=0.22]{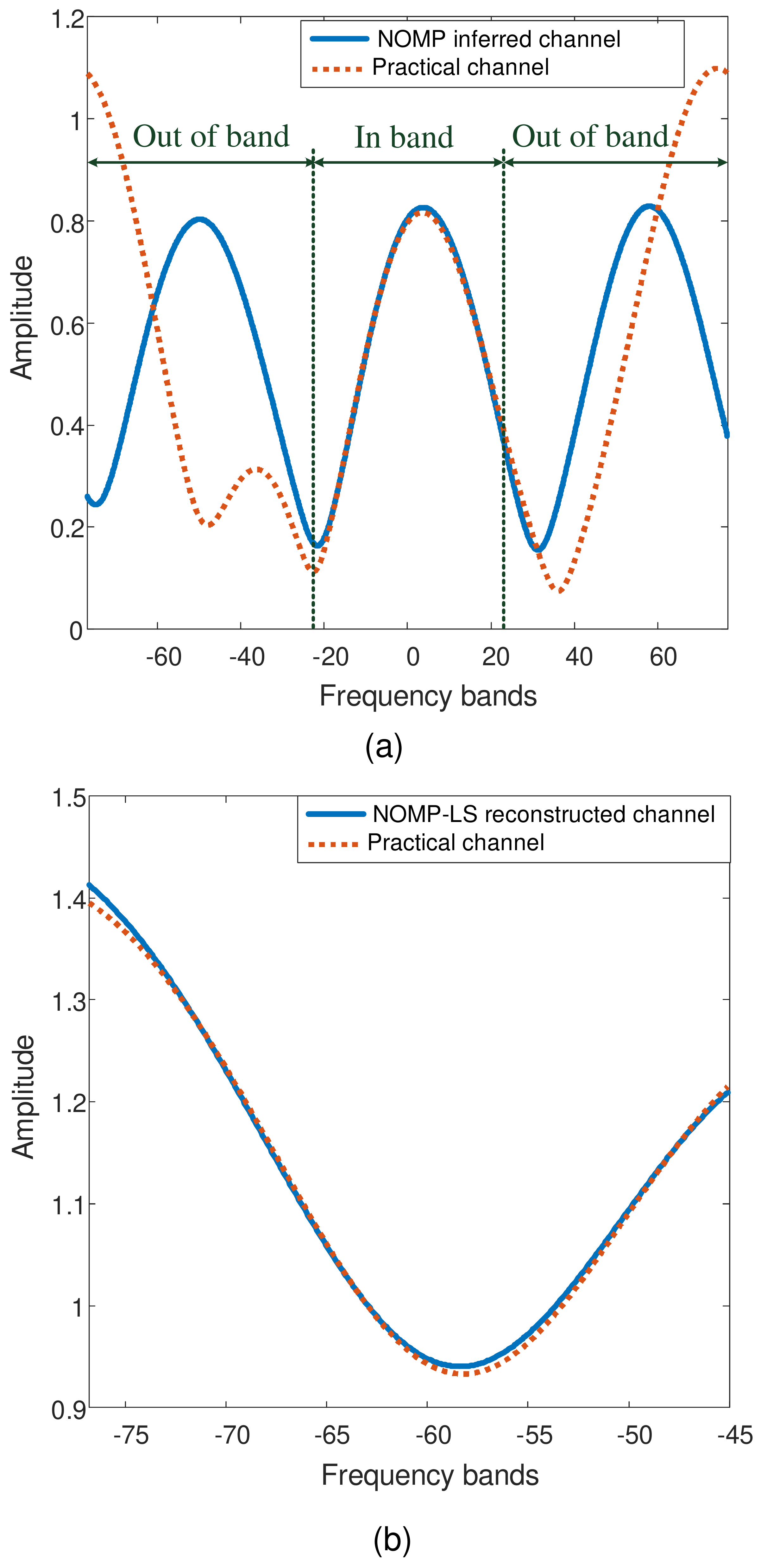}
  \caption{Full-band channel inference and reconstruction results, where (a) presents the inferred channel in the full band, and (b) shows the reconstructed channel in the out-of-band.} \label{Fig:Software FB Results}
\end{figure}

The necessity of the gain refinement is first validated through a comparison of the reconstructed channel's amplitude with that of the real channel. The total bandwidth is 153.6 MHz. The center frequency of 45 MHz is regarded as the in-band to estimate the frequency-independent parameters. The out-of-band channels on the other bands are then inferred or reconstructed using these parameters. We consider a simple example where the BS is equipped with one antenna. Fig.~\ref{Fig:Software FB Results}(a) reveals the difference between the actual full-band channel and the inferred channel. Within the 45 MHz in-band, we note that the inferred channel matches well with the actual channel, corroborating the precision of the NOMP algorithm. On the other hand, for the out-of-band, an obvious deviation can be seen between the inferred and actual channels. The large performance degradation indicates that the gains derived from the in-band estimation are insufficiently accurate in inferring the out-of-band channel. Therefore, the gains are refine using LS estimation with the aid of the out-of-band pilots which are inserted in every four subcarriers. The results of the refinement are given in Fig.~\ref{Fig:Software FB Results}(b). Results demonstrate that the reconstructed out-of-band channel matches closely with the actual channel, thereby validating the necessity and effectiveness of the gain refinement.

\begin{figure}
  \centering
  \includegraphics[scale=0.40]{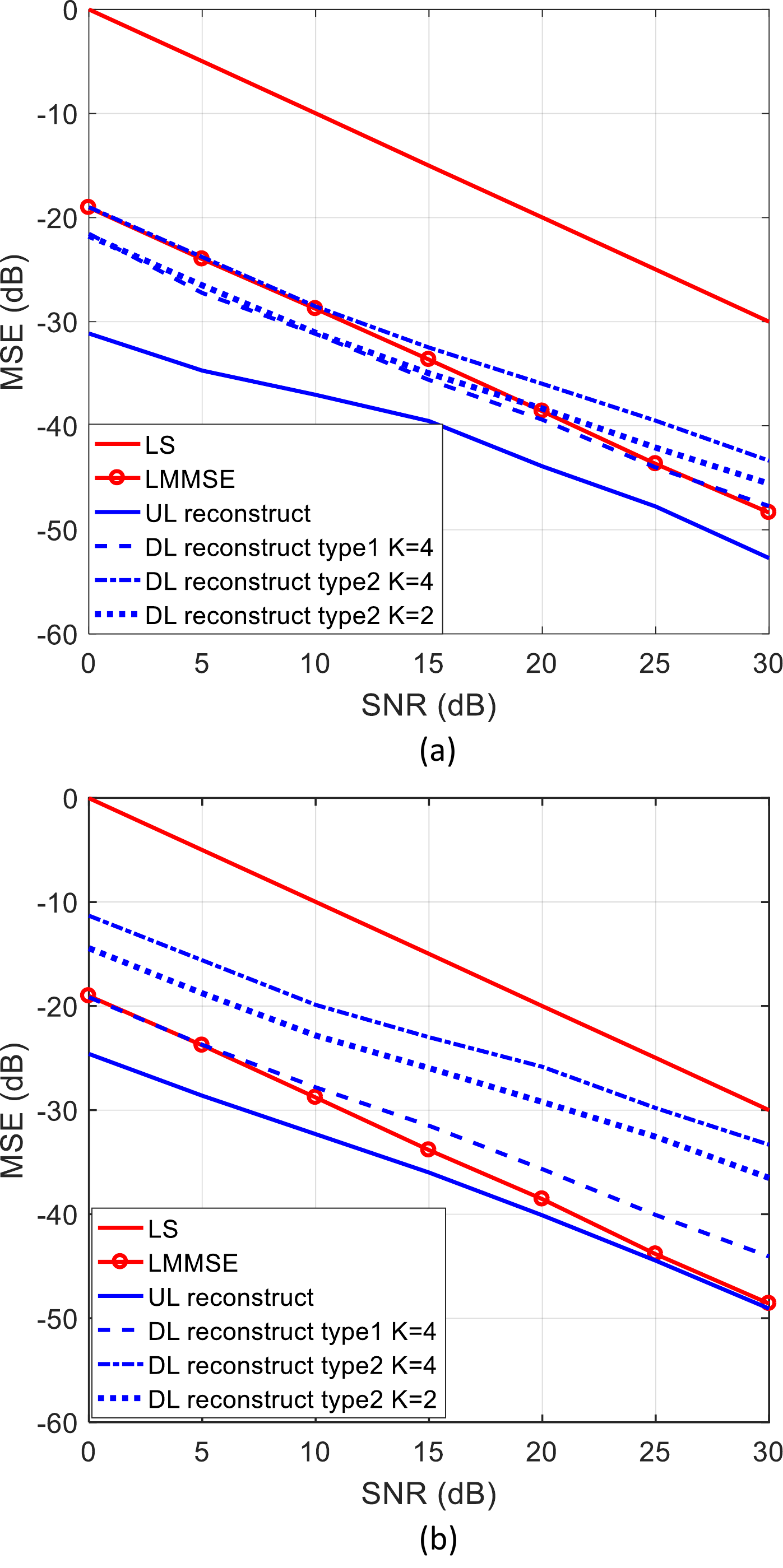}
  \caption{MSE performances of the reconstructions when $M=4$ (a) in Scenario (a), and (b) in Scenario (b).} \label{Fig:SI4O}
\end{figure}

Now, we examine the MSE performance of the proposed efficient downlink channel reconstruction scheme in FDD-OFDM systems. In both uplink and downlink OFDM modules, the central 1200 subcarriers around DC compose the transmission band whose bandwidth equals 90 MHz. The separation between the uplink and downlink central subcarriers is 300 MHz. In the uplink, all 1200 subcarriers are filled with pilots for the NOMP algorithm. Moreover, in the downlink, pilots are sparsely and uniformly inserted in every $K$ subcarriers. We focus on two propagation scenarios. Scenario (a) is a sparsely scattering scenario, where two distinct paths exist in the channel. The angles of the two paths are i.i.d.~and randomly generated in $[0^\circ, 360^\circ)$. Scenario (b) is the clustering channel, where there is one cluster with six close paths. The angular spread of the cluster is $30^\circ$. In addition, the SNR measures the ratio of the pilot power versus the noise power on one antenna and for each subcarrier.

\begin{figure}
  \centering
  \includegraphics[scale=0.40]{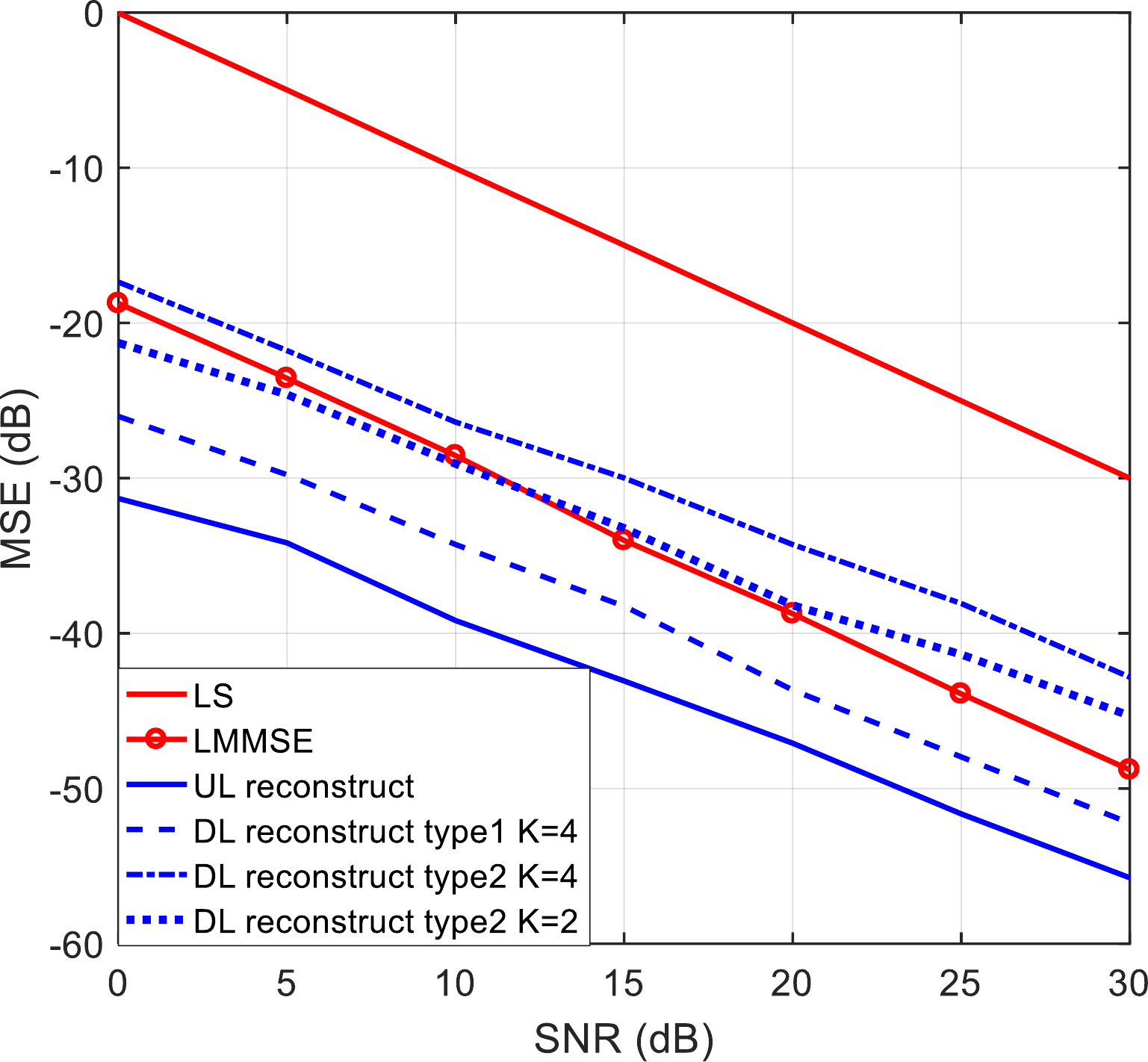}
  \caption{MSE performance of the reconstructions when $M=32$ in Scenario (b).} \label{Fig:SI32Ob}
\end{figure}

The LS and LMMSE channel estimation results are introduced as the lower and upper benchmarks, respectively. LS is a commonly used estimation method with low complexity, but has a drawback of increasing the noise. LMMSE is an improved estimation algorithm that fixes this drawback and achieves considerably higher accuracy. When conducting LS and LMMSE estimation algorithms, we use the pilots on every four subcarriers. We likewise compare the reconstructed downlink and reconstructed uplink channels with the actual channel by evaluating their MSE performance, which is calculated as
\begin{equation}\label{Eq:MSE1}
{\rm{MSE}} =\frac{\mathbb{E}\left\{\|\hat{\bf h}-{\bf h}\|^2\right\}}{\sigma_{\bf z}^2},
\end{equation}
where $\hat{\bf h}$ denotes the reconstructed channel on one subcarrier, ${\bf h}$ is the real channel, and $\sigma_{\bf z}^2$ is the addition of the noise power on multiple antennas which equals $M$ here.

The case when the BS is equipped with four antennas is tested first. Fig.~\ref{Fig:SI4O}(a) demonstrates the MSE performances of the reconstructed uplink and downlink channels in Scenario (a). Results show that the LS estimated channel has the worst MSE, whereas the uplink reconstructed channel has the best MSE. Furthermore, the precision of the uplink-reconstructed channel is even higher than the LMMSE-estimated downlink channel. This finding is attributed to the employment of in-band pilots in the estimation of in-band CSI, and obtaining an accurate composition of the multi-path components through the trivariate NOMP algorithm. Hence, the uplink-reconstructed channel is almost the same as the actual channel. To evaluate the efficient downlink reconstruction scheme, we adopt both beamforming types and compare their MSE performances. As expected, the MSE performance of the downlink reconstruction is inferior to that of the uplink reconstruction. Especially when using beamforming type 2 and setting $K=4$, a significant performance gap appears between the uplink and downlink reconstructions. If we increase the density of downlink pilots by setting $K=2$ or switch to beamforming type 1, the MSE results are improved. As an overly high performance is not necessary and the cost is large, a balance must be reached between performance and cost.

The numerical results of the four-antenna case in Scenario (b) are presented in Fig.~\ref{Fig:SI4O}(b). Extracting each path from their spatial superposition is difficult because the paths are clustered within a small angular-spread area. This condition is particularly true if the angles cannot be accurately estimated, and the number of estimated paths may be more or less than the paths that actual channel has. Hence, the performance of uplink channel reconstruction degrades when compared with that in Fig.~\ref{Fig:SI4O}(a). Regarding the downlink reconstructions, using beamforming type 1 still achieves excellent MSE performance due to the large amount of downlink beamformed pilots. By contrast, using beamforming type 2 results in about 9 dB loss in MSE when compared with type 1 if we set $K=4$ because a relatively large number of estimated paths exist and each estimated direction cannot be allocated with enough pilots. The accuracy is significantly enhanced when setting $K=2$. Therefore, the amount of downlink pilots should be increased in proportion to the number of detected propagation paths.

Now, we scale up the computer simulations by considering the 32-antenna configuration in the more complicated and commonly seen Scenario (b). This simulation aims to assess the performance of the proposed reconstruction scheme in massive MIMO environments. The over-sampling rates are reduced by setting them to 1. The results are shown in Fig.~\ref{Fig:SI32Ob}. Clearly, when the number of antenna elements grows large, the reconstructed channel has excellent performance as well. By comparing Fig.~\ref{Fig:SI32Ob} with Fig.~\ref{Fig:SI4O}(b), we first find that the MSE performance of the uplink reconstruction in the 32-antenna case is obviously better than that in the four-antenna case owing to the high spatial resolution of a large-scale antenna array. With the help of the multi-antenna gain, the downlink reconstructions are significantly improved as well. Therefore, the numerical results indicate that both the uplink reconstruction and the efficient downlink reconstruction perform well in reconstructing the actual channel.

\subsection{OTA Test Results}\label{Sec:OTAResults}

\begin{table}
  \caption{OTA Configurations}
  \label{tab1}
  \centering
  \begin{tabular}{|l|l|}
    \hline
    \bfseries Parameter & \bfseries Value \\[0.5ex]
    \hline
    Antenna Bandwidth & 90 MHz \\
    \hline
    Carrier Frequency & 3.5 GHz \\
    \hline
    Sampling Rate & 153.6 MHz \\
    \hline
    Number of FFT Points & 2048 \\
    \hline
    Subcarrier Spacing & 75 kHz \\
    \hline
    Transmit Power & --20dBm \\
    \hline
  \end{tabular}
\end{table}

\begin{figure*}
  \centering
  \includegraphics[scale=0.44]{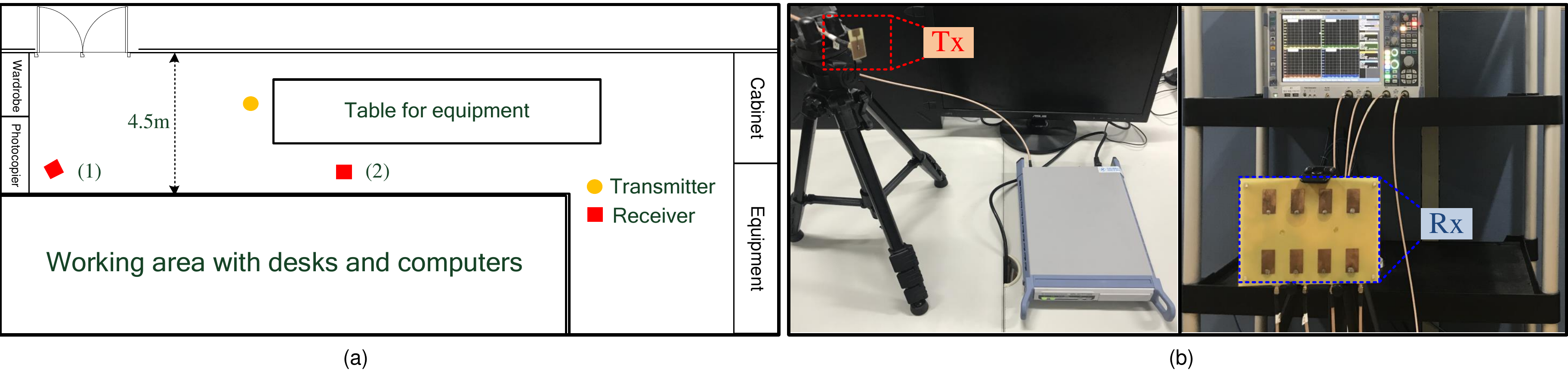}
  \caption{(a) OTA test environment in a laboratory and radio devices placed along the table for equipment. (b) The left and the right subfigures show the user and the BS, respectively.}\label{Fig:Lab}
\end{figure*}

%Having validated the accuracy of the proposed NOMP-LS based downlink reconstruction scheme through software simulations, we start to test its practical behaviour in the real communication system.

We also set up an OTA testbed [Fig.~\ref{Fig:Lab}(a)] to validate the results in practical environments. The radio devices are placed along the table for equipment. The yellow circle represents the position of the user, and the red squares are the potential positions of the BS. The BS and the user are equipped as shown in Fig.~\ref{Fig:Lab}(b). The user works as the transmitter and has a single antenna controlled by a RF vector signal generator. The BS is the receiver. The received signal at the BS antenna array is first transported to a digital oscilloscope. After down-converting, synchronizing, and sampling, the received signal is imported to the computer and processed through MATLAB. Fig.~\ref{Fig:Lab}(b) illustrates the BS antenna array, which is a four-element ULA where one column of the array is combined to form a ULA element. When evaluating the single-antenna case, the ULA is replaced by one antenna element like the user antenna.

The configurations of the OTA tests are listed in Table \ref{tab1}. Owing to the limitations of the hardware equipments, in-band versus out-of-band tests are used to imitate the uplink versus downlink tests. Considering the antenna bandwidth , we select the in-band and out-of-band regions within the central 90 MHz band and separate them to the greatest extent. As shown in Fig.~\ref{Fig:HardwareBands}, we regard the red region with 45 MHz bandwidth as the in-band to imitate the uplink. The 15 MHz-bandwidth region colored in blue is chosen as the out-of-band to imitate the downlink. The central frequency of the out-of-band region is 60 MHz away from the central frequency of the in-band region. In the gain refinement stage, $K=2$ or 4, which means that one-half or one-fourth of the subcarriers are allocated for the out-of-band pilots. Given the high accuracy of LMMSE estimation algorithm, we regard the LMMSE-estimated channel as the real channel when evaluating the out-of-band reconstruction scheme. The out-of-band channel inference method is also evaluated, which represents the method introduced in \cite{Vasisht2016Eliminating}. Similarly, MSE is used as the metric, which is calculated as
\begin{equation}\label{Eq:MSE2}
{\rm MSE}=\frac{\mathbb{E}\left\{\left\|{\hat {\bf h}}-{\hat {\bf h}}_{\rm LMMSE}\right\|^2\right\}}{\sigma_{\bf z}^2},
\end{equation}
where ${\hat {\bf h}}_{\rm LMMSE}$ is the LMMSE-estimated channel and is regarded as the real channel. The test results are displayed in the form of a cumulative distribution function (CDF).

We start by reconstructing the simplest channel when the BS is equipped with a single antenna. In the single-antenna tests, $K=4$. Fig.~\ref{Fig:HardwareSISO1}(a) provides the CDF of the MSE (in dB) when the BS is located at Position (1). The figure shows that in-band reconstruction achieves the highest accuracy, with a 90\% probability that the MSE is below $-9.65$ dB. However, the performance of the out-of-band inference is poor, with a 90\% probability that MSE is lower than $15.36$ dB, demonstrating that the inferred channel can not accurately depict the actual channel. Fortunately, the accuracy is greatly improved when the gains are refined with the aid of the out-of-band pilots. The out-of-band reconstruction scheme functions well, with a 90\% probability that the MSE is below $-6.64$ dB. These OTA results align with the previous numerical results. We can further investigate the power ratio of the propagation paths through the results in Fig.~\ref{Fig:HardwareSISO1}(b). The first detected path occupies 75.2\% power of the channel because a strong line-of-sight (LoS) propagation path can be detected at Position (1). The power ratio increases to 94.3\% after the second detected paths is added. The results also indicate that the possibility of detecting more than four paths is below 0.036.

\begin{figure}
  \centering
  \includegraphics[scale=0.5]{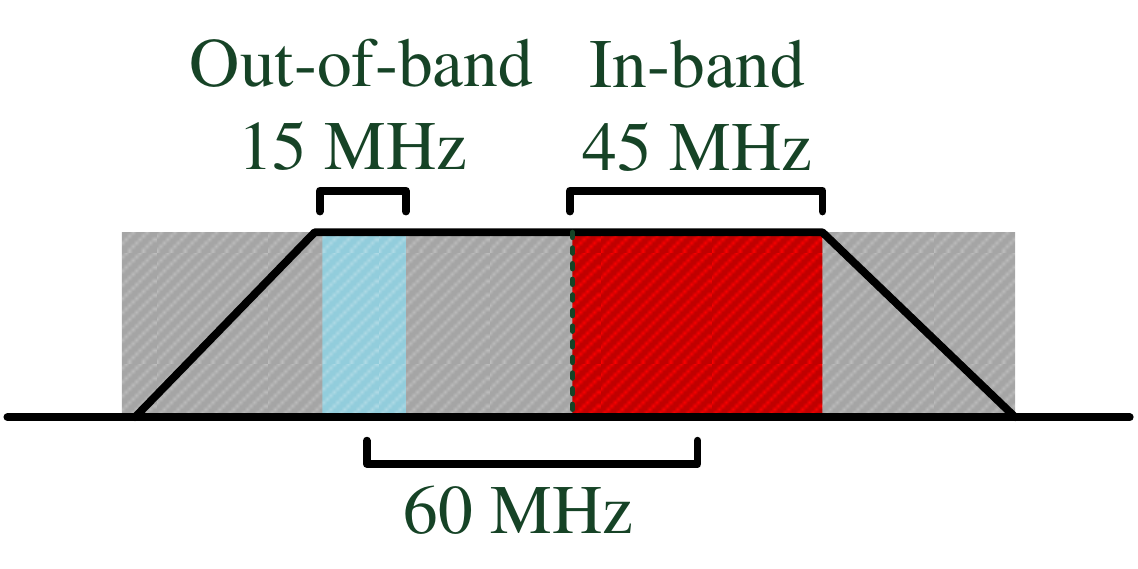}
  \caption{In-band (red region) and out-of-band (blue region) in OTA tests.}\label{Fig:HardwareBands}
\end{figure}

\begin{figure*}
  \centering
  \includegraphics[width=1.00\linewidth]{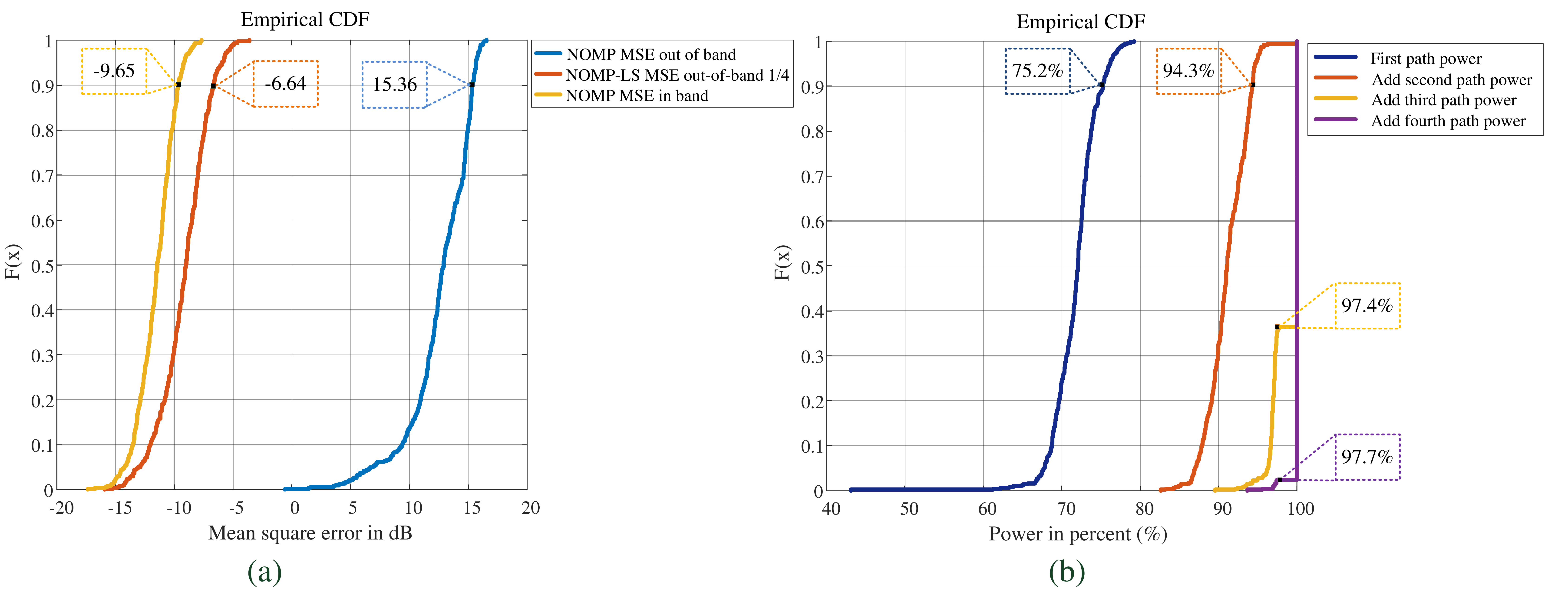}
  \caption{SISO: MSE performances of the channel reconstructions and the power ratio of reconstructed paths when conducting OTA tests in Position (1).}\label{Fig:HardwareSISO1}
\end{figure*}

\begin{figure*}
  \centering
  \includegraphics[width=1.00\linewidth]{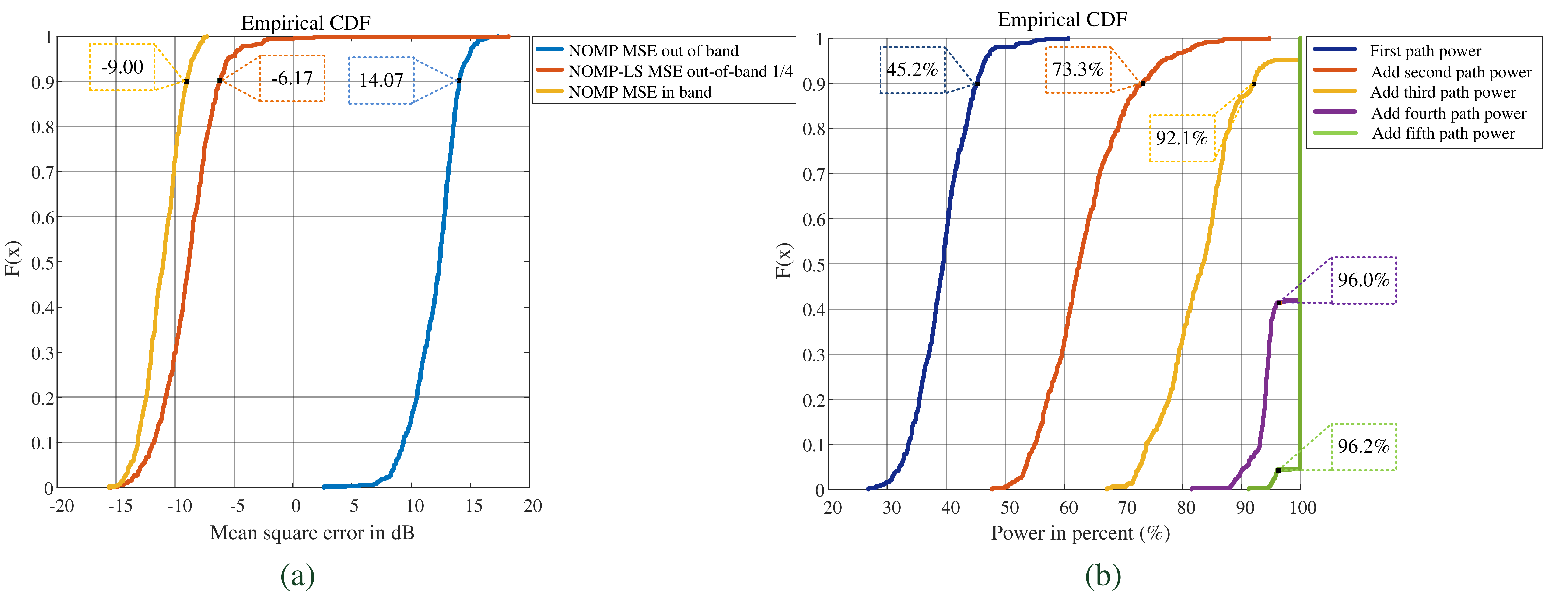}
  \caption{SISO: MSE performances of the channel reconstructions and the power ratio of reconstructed paths when conducting OTA tests in Position (2).}\label{Fig:HardwareSISO2}
\end{figure*}

The performances are then tested when the BS is located at Position (2), and the results are given in Fig.~\ref{Fig:HardwareSISO2}. We find that both in-band and out-of-band reconstructions still have excellent performances in the non-LoS (NLoS) propagation scenario. The 90\%-probability MSEs of the two reconstructions are $-9.00$ and $-6.17$ dB, respectively. Although the MSE performance is inferior to that of the LoS case, this performance degradation is relatively small. As for the power ratio of the propagation paths, the first reconstructed path only occupies 45.2\% power of the channel and has a 0.416 probability that the number of existing paths is more than four. These results clearly reveal that the out-of-band reconstruction scheme functions well even when multiple NLoS propagation paths exist in the channel.

\begin{figure*}
	\centering
	\includegraphics[width=1.00\linewidth]{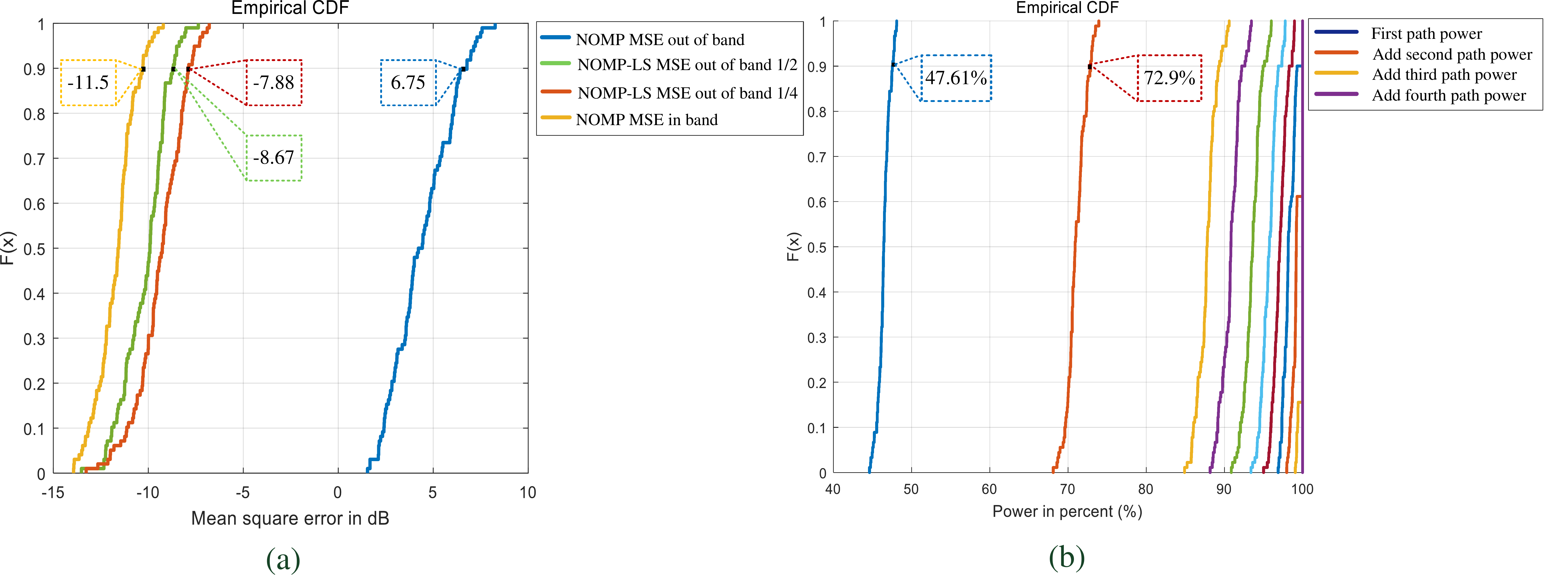}
	\caption{SIMO: MSE performances of the reconstructions and the power ratio of reconstructed paths when conducting OTA tests in Position (1).}\label{Fig:HardwareSI4O1}
\end{figure*}

\begin{figure*}
	\centering
	\includegraphics[width=1.00\linewidth]{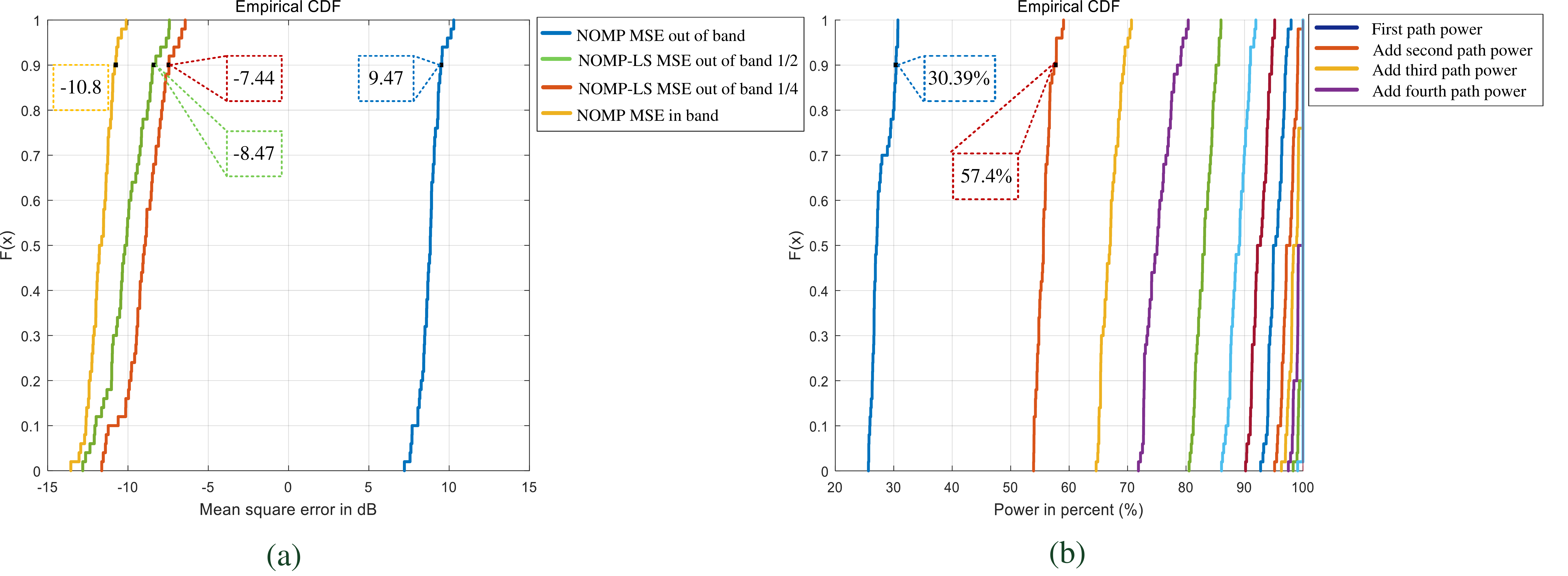}
	\caption{SIMO: MSE performances of the reconstructions and the power ratio of reconstructed paths when conducting OTA tests in Position (2).}\label{Fig:HardwareSI4O2}
\end{figure*}

Next, the performance of the out-of-band reconstruction is evaluated in the single-input multiple-output (SIMO) system where the BS is configured to have four antennas.
%MSE performance of the out-of-band reconstructed channel on each individual antenna is shown in Fig. \ref{Fig:HardwareAntenna}, where $K=2$. Obviously, the gain refinement stage is essential for the correctness of the out-of-band channels.
The CDF of the MSE results when the BS is located at Position (1) is shown in Fig.~\ref{Fig:HardwareSI4O1}. Different from the single-antenna cases, in the multi-antenna case, the BS needs to estimate the angles, in addition to the delays and gains. The channel tested from Position (1) is similar to Scenario (b) in computer simulations. A dominant LoS propagation path exists, but it is surrounded by multiple paths. These paths compose a cluster, which makes it more difficult to separate these paths from one another. Fig.~\ref{Fig:HardwareSI4O1}(b) shows that the number of estimated paths increases greatly when compared with that in Fig.~\ref{Fig:HardwareSISO1}(b), even though these results are derived at the same place. This finding is attributed to the multiple antennas that enhance the spatial resolution and explain the spatial channel in a more detailed way. Therefore, the in-band channel reconstruction achieves higher accuracy in a multiple-antenna environment than in a single-antenna environment. When the out-of-band channel reconstruction scheme is used, pilots are inserted in every two or four subcarriers at the gain refinement stage. The reconstruction accuracy for the out-of-band is improved when the density of the pilots is increased. It has a 90\% probability that the MSE is below $-8.67$ dB if we set $K=2$. These results strongly reveal the spatial reciprocity between the channels in separated bands and validate the effectiveness of the proposed reconstruction scheme.

Fig.~\ref{Fig:HardwareSI4O2} presents the results in the setup where the BS is located at Position (2) and the channel has NLoS paths. A slight but negligible performance degradation can be found from the MSE lines of both in-band reconstruction and out-of-band reconstruction when these are compared with Fig.~\ref{Fig:HardwareSI4O1}. Additionally, the power ratio of the first two detected paths decreases. These observations are in accordance with those from Figs.~\ref{Fig:HardwareSISO1} and \ref{Fig:HardwareSISO2}, thus demonstrating the correctness of the OTA tests regarding the proposed reconstruction scheme and indicating that the reconstruction scheme can work well in cases with more antennas.

\section{Conclusion}\label{Sec:Conclusion}

In this study, an efficient downlink channel reconstruction scheme was proposed for FDD multi-antenna systems. The scheme uses the frequency-independent features of the spatial parameters and tackles the problem of downlink CSI acquisition at the BS in the absence of uplink-downlink reciprocity with limited overhead of downlink training and feedback. We extended the NOMP algorithm to cope with the multi-antenna multi-subcarrier condition for extracting the frequency-independent parameters. Numerical simulations validated the effectiveness of the gain refinement, which causes the downlink training and feedback overhead. Our OTA tests demonstrated that the downlink reconstruction scheme achieves promising MSE performance. The scheme and the OTA results have directive significance to the design of FDD massive MIMO systems.

\end{document}